\documentclass[a4paper]{article}
\usepackage[a4paper, margin=2.5cm]{geometry}
\usepackage{microtype}
\usepackage{graphicx,amssymb,amsmath}
\usepackage{color}
\usepackage{lipsum}
\usepackage{array}
\usepackage{wrapfig}

\usepackage{hyperref}
\usepackage{gensymb}
\usepackage[capitalise,noabbrev]{cleveref}
\usepackage{thm-restate}

\newtheorem{theorem}{Theorem}

\newtheorem{lemma}[theorem]{Lemma}

\def\QED{\ensuremath{{\square}}}
\def\markatright#1{\leavevmode\unskip\nobreak\quad\hspace*{\fill}{#1}}
\newenvironment{proof}
  {\begin{trivlist}\item[\hskip\labelsep{\bf Proof.}]}
  {\markatright{\QED}\end{trivlist}}

\declaretheorem[name=Claim,sibling=theorem]{claim}
\declaretheorem[name=Corollary,sibling=theorem]{corollary}
\declaretheorem[name=Observation,sibling=theorem]{observation}

\newcommand{\edgedistance}{\ensuremath{h}}   %
\newcommand{\cornerdistance}{\ensuremath{d}}   %
\newcommand{\tangentcircle}{\ensuremath{R}}   %
\newcommand{\polar}{\ensuremath{\circ}}   %
\newcommand{\calI}{\ensuremath{\mathcal{I}}}   %

\title{On Reconstructing a Convex Polygon from Partial Information\footnotemark[1]}

 \author{Alexander Baumann\footnotemark[2] \and Therese Biedl\footnotemark[3]  \and  Mahmoud Elashmawi\footnotemark[2] \and Simon D.~Fink\footnotemark[4] \and  Maria Saumell\footnotemark[5]  \and  Andr\'{e} Schulz\footnotemark[6]}

\begin{document}

\maketitle

\renewcommand{\thefootnote}{\fnsymbol{footnote}}
\footnotetext[1]{This work was initiated at the European Research Week on Geometric Graphs 2025, held in Chorin (Germany). We thank the organizers for a fruitful atmosphere.   
We also thank Wolfgang Mulzer for helpful input.}
\footnotetext[2]{Institut für Informatik, Freie Universität Berlin, Germany.}
\footnotetext[3]{David R.~Cheriton School of Computer Science, University of Waterloo, Canada.   Research done while visiting Freie Universität Berlin, Germany.   Supported by NSERC.}
\footnotetext[4]{Algorithms and Complexity Group, TU Wien, Austria.}
\footnotetext[5]{Department of Theoretical Computer Science, Faculty of Information Technology, Czech Technical University in Prague, Czech Republic. Supported by the Czech Science Foundation, grant number 23-04949X.}
\footnotetext[6]{FernUniversität in Hagen, Germany.}
\renewcommand{\thefootnote}{\arabic{footnote}}

\begin{abstract}
The reconstruction problem asks to construct a (convex) polygon that has a specified
set of features, such as an ordered set of edge-lengths or an ordered set of 
polygon-angles.   In this paper, we do a systematic exploration 
of the reconstruction problem in
all scenarios where one or two sets of 
features have been specified.    Some of these scenarios were well-studied
already, for some we develop testing algorithms and/or hardness results, and 
many give rise to interesting open problems for future study.
\end{abstract}

\section{Introduction}

Consider a %
polygon $P$ with $n\geq 3$ corners.
This polygon has a number of \emph{features}, for example the lengths
of the $n$ edges, or the $n$ angles at the corners, which can be computed
from the coordinates of the corners.  In this work, we study
the reverse \emph{reconstruction question}:   
Given some features of an unknown polygon $P$,
is there a \emph{realization}, i.e., a polygon $P$ that has these features?
Can we efficiently test whether there is one?
If so, can we
construct $P$ efficiently and is it unique (at least up to some obvious transformations)?   

Questions of this type have been studied previously; we list many results at the end of this section and a few especially relevant ones here.   The \emph{Minkowski
problem} asks whether a convex polygon exists that has the origin inside, has a given ordered set of edge 
lengths $\ell_0,\dots,\ell_{n-1}$,  and a given ordered set of \emph{(unit) edge normals} $\mathbf{n}_0,\dots,\mathbf{n}_{n-1}$, where $\mathbf{n}_i$ should be
the normal of the halfplane  supporting the edge of length $\ell_i$.
See Huang, Yang, and Zhang~\cite{huangMinkowskiProblemsGeometric2025}
for a history and a short proof that this 
is possible if and only if
$\sum_{i=0}^{n-1} \ell_i\mathbf{n}_i$ is the 0-vector. 

Our research was inspired by the open question asked by J. O'Rourke at CCCG'25~\cite{CCCG2025openproblems} (which in turn was inspired by%
~\cite{huangMinkowskiProblemsGeometric2025}): Can we reconstruct a polygon if we are given the edge normals 
and the areas of the \emph{triangles spanned by the edges},
i.e., the triangles formed by two consecutive corners and the origin?    
This problem was introduced by Stancu, who called it the 
\emph{$L_0$-Minkowski problem}~\cite{Stancu2002}.
Stancu showed that there always is a solution as long as no two edge normals
are opposite to each other.   
Stancu's proof is not constructive and implies no algorithm for computing the desired polygon.

We were not able to find a practical algorithm for the $L_0$-Minkowski problem, but 
inspired by these questions, we decided to explore the problem of reconstructing
a convex polygon from features more broadly.
Specifically, we looked
at six features that naturally arise in any convex polygon with the origin inside, and more generally in any polygon $P$ that is \emph{star-shaped from the origin},  i.e., all segments 
from the origin to a corner lie strictly inside $P$.   
(%
We simply call this ``star-shaped'' here.)
Say polygon $P$ has corners $c_0,\dots,c_{n-1}$ in counter-clockwise order. %
For $i\in \left\{0,\dots,n-1\right\}$, we define the following, treating all addition among indices modulo $n$ (see also Figure~\ref{fig:definitions}):
\pagebreak

\begin{figure}[ht]
  \centering
  \includegraphics[scale=1,page=2]{figures/definitions.pdf}
  \caption{Features of a star-shaped polygon (bold).   
  }%
  \label{fig:definitions}%
  \label{fig:nomenclature}
\end{figure}

\begin{itemize}
\itemsep -2pt
\item The \emph{edge length} $\ell_i>0$ is the length of the \emph{edge} $e_i=\overline{c_{i}c_{i+1}}$.
\item The \emph{(triangle) area} $A_i$ is the area of the triangle $T_i$ spanned by $e_i$ and the origin.
\item The \emph{(unit) edge normal} $\mathbf{n}_i$ is the unit-length vector that is perpendicular to the line $\mathbf{l}_i$ supporting $e_i$, and that has a positive inner product with all points on~$\mathbf{l}_i$.%
\footnote{This is of course equivalent to any normal vector if we are in a computer model that can compute $\sqrt{\cdot}$, but requiring it to be a unit vector will allow us to use a simpler computer model sometimes.
We will not study any other normals and hence omit ``unit'' from now on.}   Edge normals are closely related to the \emph{polygon angle} $\theta_i$ (the inner angle at corner $c_i$):
$\theta_i$ equals $180^\circ$ %
minus the angle between $\mathbf{n}_{i-1}$ and  $\mathbf{n}_i$. 
The edge normal also corresponds to the \emph{edge ray}, which is the ray from origin $o$ along normal $\mathbf{n}_i$.

\item The \emph{edge distance} $\edgedistance_i>0$ is 
the distance of the line $\mathbf{l}_i$ supporting $e_i$ from the origin.   (The name ``$h$'' was chosen since this is the height of $T_i$ if we view $e_i$ as the base of this triangle.)
\item The \emph{(unit) corner vector} $\mathbf{c}_i$ is the unit-length vector along the \emph{corner ray}, i.e., from the origin along the ray through corner $c_i$.\footnote{As with edge normals, requiring unit length is helpful sometimes, but not required if we can compute $\sqrt{\cdot}$.}
Corner vectors are closely related to the  \emph{ray angle} $\rho_i$ (the angle between the corner rays): $\rho_i$ is the angle between $\mathbf{c}_i$ and $\mathbf{c}_{i+1}$.
\item The \emph{corner distance} $\cornerdistance_i$ is $||c_i||$, i.e., the distance of
	corner $c_i$ from the origin.
\end{itemize}

Each of these features consists of a sequence of $n$ numbers.
A polygon $P$ is defined by $2n$ coordinates, and so one would
expect that it takes two of these features 
to specify the polygon.    This is the main reason for us to consider the 
reconstruction question for all possible pairs of features.    Surprisingly
enough, the back-of-the-envelope computation of the degrees of freedom can be off: 
sometimes even specifying just one feature can make it impossible to have a convex polygon
(see, for example, Section~\ref{sec:edge_distance-edge_distance}).   For this reason,
we also study all the situations where only one feature has been specified.
Table~\ref{tab:settings} gives an overview of all the \emph{scenarios} (combinations of features) that we study, 
with links to the corresponding sections.

\begin{center}
\begin{table}[h!]
\centering
  \newcolumntype{V}{>{\centering\arraybackslash} m{1.235cm} }
  \newcommand{\none}{$\cdot$}
  \newcommand{\diag}{\backslash}
  \newcommand{\Block}{}
  \newcommand{\prob}[3]{\mbox{\rlap{\ref{sec:#2-#3}}\hyperref[sec:#2-#3]{\includegraphics[page=#1,scale=0.6]{figures/polygon_data_new}}}}
  \newcommand{\para}[2]{\Block{}{\includegraphics[page=#1,scale=0.5]{figures/polygon_data_new}}}
  \begin{tabular}{@{}c|VVVVVV}%
& triangle & edge & edge & corner & corner & edge \\
& area    & length & distance & vector & distance & normal \\
	& $A_i$                         
	& $\ell_i$                             
	& $\edgedistance_i$ 
	& $\mathbf{c}_i$                        
	& $\cornerdistance_i$                                
	& $\mathbf{n}_i$                       
	\\ 
\hline
$A_i$      
	& \prob{18}{area}{area}
	& \prob{1}{area}{edge_length}   
	& \prob{5}{area}{edge_distance} 
	& \prob{3}{area}{corner_vector}     
	& \prob{4}{area}{corner_distance}    
	& \prob{2}{area}{edge_normal}    
	\\
$\ell_i$   
	& \none %
	& \prob{19}{edge_length}{edge_length}
	& \prob{9}{edge_length}{edge_distance} 
	& \prob{7}{edge_length}{corner_vector}     
	& \prob{8}{edge_length}{corner_distance}    
	& \prob{6}{edge_length}{edge_normal}    
	\\
$\edgedistance_i$      
	& \none %
    & \none
	&  \prob{23}{edge_distance}{edge_distance} 
	& \prob{14}{corner_vector}{edge_distance} 
	& \prob{15}{corner_distance}{edge_distance} 
	& \prob{12}{edge_normal}{edge_distance} 
	\\
$\mathbf{c}_i$   
	& \none %
	& \none %
	& \none 
	& \prob{21}{corner_vector}{corner_vector}
	& \prob{13}{corner_vector}{corner_distance}    
	& \prob{10}{edge_normal}{corner_vector}     
	\\
$\cornerdistance_i$      
	& \none %
	& \none %
	& \none 
	& \none %
	& \prob{22}{corner_distance}{corner_distance}
	& \prob{11}{edge_normal}{corner_distance}    
	\\
$\mathbf{n}_i$
	& \none %
	& \none %
	& \none                                
	& \none                               
	& \none 
	& \prob{20}{edge_normal}{edge_normal}
	\\
  \end{tabular}
  \caption{Studied combinations of features.}%
  \label{tab:settings}
\end{table}
\end{center}

\paragraph{Our contribution.}
The main contribution of this work is the list of
reconstruction problems, together with a compilation about what is already known.   
Additionally, we investigate some of the less-studied scenarios.
Our results here include the following:
\begin{itemize}
\item In the scenario ``only triangle areas'', there always is a convex realization,
	and it can be found in linear time 
	(Section~\ref{sec:area-area}).
\item In the scenario ``only edge distances'', there is not always a convex realization, but there always is a star-shaped one, and
	we can test whether there exists a convex one in linear time if edge distances are distinct   
	(Section~\ref{sec:edge_distance-edge_distance}).
	Mostly due to polarity, the same holds for the scenario ``only corner distances''
	(Section~\ref{sec:corner_distance-corner_distance}).
\item In the scenario ``edge normals and corner vectors'', there is not always a convex realization or even a star-shaped realization.
The existence of such realizations can be tested in polynomial time (Section~\ref{sec:edge_normal-corner_vector}).
\item In the scenario ``areas and corner distances'', there is not always a convex or a star-shaped realization.
We can test in polynomial time whether there exists one, both for convex and for star-shaped 
(Section~\ref{sec:area-corner_distance}).
\item In the scenario ``areas and corner vectors'', there is not always a convex or a star-shaped realization. In both cases, we can test in polynomial time whether there exists one.
We also consider the variation where the features are given as an \emph{unordered} set, 
	and can argue that testing the existence of a polygon is then NP-hard even for a 
	star-shaped polygon
	(Section~\ref{sec:area-corner_vector}).
\end{itemize}

Some of these results are obtained by fixing one additional length and ``propagating'', i.e., arguing that with this all other features are fixed.   In some other scenarios, we argue that such propagation is difficult because there can be locally a choice of how to choose the next corner.   In fact, this can be exploited sometimes for hardness results, at least 
for some variants of the problem.   See Sections~\ref{sec:corner_distance-edge_distance} and~\ref{sec:edge_length-corner_vector} for details. 
The remaining scenarios, discussed in Sections~\ref{sec:area-edge_normal}--\ref{sec:area-edge_length}, are open.

\paragraph{Further Related Results.}%

The literature on reconstruction problems in the plane is quite rich, and there are naturally several results where the input data do not correspond to any of the features (or their combinations) considered in this paper.

Snoeyink~\cite{snoeyink_cross-ratios_1999} studied a reconstruction problem where we are given a polygon $P$ (with a triangulation), the polygon angles, and the so-called \emph{cross-ratios} (a ratio between edge-lengths at the quadrangles defined by the internal edges). He showed that one can uniquely reconstruct the polygon $P$, up to scaling and rotation, given this information. Biedl, Durocher and Snoeyink~\cite{biedl_reconstructing_2011} focused on a setup where the goal is to reconstruct a simple polygon from different types of data that can be obtained from range scanners, e.g., a set of points on the boundary of the polygon  such that each point records the line supporting the edge on which the point lies, or such a set that additionally records a vector perpendicular to the line that
points towards the polygon’s interior. Barequet, Gotsman and Sidlesky~\cite{DBLP:conf/cccg/BarequetGS06} considered the problem of reconstructing a planar polygon from its intersections with a collection of arbitrarily-oriented “cutting” lines.

Chen and Wang~\cite{Chen2012254} considered a polygon reconstruction problem where visibility information is relevant: In this case, the input consists of the sorted vertex sequence along the boundary and, for each vertex, the sequence of angles defined by all other visible vertices. Their reconstruction algorithm is an improvement with respect to the one by Disser, Mihalák and Widmayer~\cite{disser_polygon_2011}. A famous reconstruction problem involving visibility is that of reconstructing a polygon from its vertex visibility graph. The version closest to our paper is probably the one studied by Coullard and Lubiw~\cite{DBLP:journals/ijcga/CoullardL92}:  Given an edge-weighted graph $G$, is it the visibility graph of a simple polygon with the given weights as Euclidean distances? See also the Computational Geometry Handbook~\cite[Chapter 34]{goodman_handbook_2018} for other polygon reconstruction problems.

Finally we mention that there are many related reconstruction problems in 3D; we refer to a recent paper by Cho and Kim~\cite{ChoKim25} that gives a good overview of the history of 3D problems.

\section{Preliminaries}

We assume familiarity with basic notation around \emph{polygon}, \emph{convex}, and \emph{star-shaped}.
When reconstructing a polygon, we always demand that the origin is strictly inside, and
disallow \emph{flat} ($180^\circ$) polygon angles. 
Sometimes we consider a convex polygon $P$ that additionally has 
\emph{edge rays hitting their edges}, 
i.e., for all $i$ the ray along edge normal $\mathbf{n}_i$ actually intersects $e_i$.
(This is violated in Figure~\ref{fig:nomenclature} at $\mathbf{n}_i$.) 

We often use the well known \emph{side-area-side-formula} (or \emph{SAS-formula}), which in our notation means that $A_i=\tfrac{1}{2}d_i d_{i+1}\sin 
\rho_{i}$ 
for all indices $i$. 
Also, $A_i=\ell_i h_i/2$.

The following will be a helpful tool to reduce the number of scenarios.
Any convex polygon $P$ with the origin strictly inside defines another convex polygon $P^\polar$ (the \emph{polar polygon with respect to the unit circle}) as follows.    Map every corner $(c_x,c_y)$ to the half-plane $\{(x,y): c_x x+c_y y\leq 1\}$, and let $P^\polar$ be the intersection of these half-planes.   From the definition it follows, that the corner rays of $P$ equal the edge rays of $P^\polar$, and furthermore, the corner distances of $P$ and the edge distances of    $P^\polar$ are inverse to each other. %
The polar of the polar polygon is the original~\cite{grunbaum2003convex}, 
and as such, the edge distances of $P$ also equal the inverse corner distances of $P^\polar$. 
The polar polygon is not well-defined if $P$ is not convex, see also Inchbald~\cite{Inchbald24} for related discussions.

\paragraph{Further Features.}%
\label{app:tangentcircle}%
\label{app:partialrayangle}%
\label{app:partialtriangle}
Among the features not listed in the introduction are a few more that are derived (directly or via combinations) from the earlier features.  As these will come in handy later, we define them here.   For $i\in \{0,\dots,n{-}1\}$, we define the following (see also Figure~\ref{fig:definitions_more}):
\begin{itemize}
\item  The \emph{tangent circle} $\tangentcircle_i$ is the circle centered at the origin and with radius $h_i$.   Thus, the line $\mathbf{l}_i$ that supports edge $e_i$ must be tangent to $\tangentcircle_i$.
\item The \emph{tangent point} $x_i$ is the point where $\mathbf{l}_i$ is tangent to $\tangentcircle_i$.   Equivalently, it is the point where the edge ray of $e_i$ hits the line supporting $e_i$; this point may or may not lie on edge $e_i$.
\item The \emph{(backward) partial triangle} $T_i^-$ is the triangle spanned by the origin $o$, the corner $c_i$, and the tangent point $x_i$.   Put differently, it is defined by the edge ray, the corner ray, and the line through $e_i$.    Its angle at $o$ is called the
\emph{(backward) partial ray angle} and denoted $\rho_i^-$.   Equivalently, $\rho_i^-$ is  the angle between $\mathbf{n}_i$ and $\mathbf{c}_{i}$ that lies in the range $[0,90^\circ)$.   

\item The \emph{(forward) partial triangle} $T_i^+$ and \emph{(forward) partial ray angle} $\rho_i^+$ are defined symmetrically, except that they use corner $c_{i+1}$.   
\end{itemize}
Note that if the edge ray does not hit the edge, then one of the two partial triangles is a subset of the other, so the triangle $T_i$ may be their union or their symmetric difference (and similarly the ray angle $\rho_i$ is not necessarily $\rho_i^-+\rho_i^+$).   

\begin{figure}[t]
  \centering
  \includegraphics[scale=1,page=3]{figures/definitions.pdf}
  \caption{More features of a star-shaped polygon. }%
  \label{fig:definitions_more}
\end{figure}

\paragraph{Input and computer model.}
Whenever we speak of ``given input'', we assume that it satisfies some obviously necessary conditions
for there to be a star-shaped polygon.
In particular, all areas and lengths must be strictly positive, corner vectors must not all lie in one half-space, and likewise for the edge normals. Also,  corner vectors (and for convex
polygons also the edge normals) must appear in the circular ccw order around the origin.
Unless explicitly stated otherwise, we assume that the input is labelled, i.e., we know exactly which specified
length/distance/ray/triangle-area corresponds to which edge/corner.

As usual, we assume that the input has been specified with numbers that are small enough to fit
into a computer word, and we can do basic arithmetic operations on computer words in constant time
(the Word\-RAM model).
This does not permit square roots or
trigonometric functions to be computed in constant time,
so occasionally we need 
the more powerful RealRAM model (which permits these~\cite[Section 2.3]{ShamosThesis}).

\section{Settings Defined by One Type of Information}%
\label{sec:one_feature}

We begin 
with the scenarios where only one feature is specified. Many results are known here already, but to our surprise some appear to be unstudied.

\subsection{Known results}

\begin{enumerate}
    \item%
\makeatletter\edef\@currentlabel{\arabic{section}.\arabic{subsection}.\arabic{enumi}}\makeatother%
\label{sec:edge_normal-edge_normal}
Consider the scenario ``only edge normals'', which is equivalent to being given polygon angles (up to rotation).
Reconstructing a polygon from its polygon angles 
appears to first have been studied by
Vijayan~\cite{Vijayan1986}, who proved that any sequence of polygon angles can be realized as long as 
they sum to $(n-2)180^\circ$. 
See also~\cite{disser_polygon_2011} and~\cite{efrat_polygons_2022} for further results.

\item%
\makeatletter\edef\@currentlabel{\arabic{section}.\arabic{subsection}.\arabic{enumi}}\makeatother%
\label{sec:corner_vector-corner_vector}
The scenario ``only corner vectors'' is the same as the previous for a convex polygon by polarity.   Any set of corner vectors can be realized by a star-shaped polygon.
\item%
\makeatletter\edef\@currentlabel{\arabic{section}.\arabic{subsection}.\arabic{enumi}}\makeatother%
\label{sec:edge_length-edge_length}

The scenario ``only edge lengths'' is closely related to
\emph{linkage reconfiguration} problems. Lenhart and Whitesides~\cite{LenhartW95} showed that edge lengths can be realized if and only if the longest edge-length is less than the sum of all others. If so, then the edge lengths can even be realized by a convex polygon  with all vertices on a common circle~\cite{cyclic_polygons}.

\end{enumerate}

\subsection{Triangle Area}%
\label{sec:area-area}

The triangle area is less obvious than other features, and so it is perhaps not surprising that this scenario appears to not have been studied before.
We show that there always exists a realization.

\begin{lemma}
Let $A_0,\dots,A_{n-1}$ be an arbitrary sequence of positive numbers.   Then there exists an $n$-vertex convex polygon $P$ with the origin inside such that the triangles defined by consecutive corners and the origin have areas $A_0,\dots,A_{n-1}$, in this order.
\end{lemma}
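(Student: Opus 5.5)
We construct the polygon with all corners on a common circle centered at the origin, say of radius $r$, so that $d_i=r$ for all $i$. By the SAS-formula, $A_i=\tfrac12 r^2\sin\rho_i$, so it suffices to find ray angles $\rho_0,\dots,\rho_{n-1}\in(0,180^\circ)$ with $\sum_i\rho_i=360^\circ$ and $\sin\rho_i=2A_i/r^2$ for all $i$. A polygon inscribed in a circle with the center strictly inside is automatically convex, and the origin lies strictly inside precisely when every $\rho_i<180^\circ$ (together with the sum being $360^\circ$). Flat angles cannot occur, since the corners are distinct points on a circle.

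Setting $s=2/r^2$, we choose $\rho_i(s)=\arcsin(sA_i)\in(0,90^\circ]$, which is defined for $0<s\le 1/A_{\max}$, where $A_{\max}=\max_i A_i$. The function $F(s)=\sum_i\rho_i(s)$ is continuous and increasing, and $F(s)\to 0$ as $s\to 0$. If $F(1/A_{\max})\ge 360^\circ$, the intermediate value theorem gives an $s$ with $F(s)=360^\circ$, and we are done, with all angles at most $90^\circ$.

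The main obstacle is the case $F(1/A_{\max})<360^\circ$, which happens for instance for small $n$ or when one area dominates. Then we allow exactly one angle, at an index $j$ with $A_j=A_{\max}$, to be obtuse, taking $\rho_j(s)=180^\circ-\arcsin(sA_j)$. Define $G(s)=180^\circ-\arcsin(sA_j)+\sum_{i\ne j}\arcsin(sA_i)$ on $(0,1/A_{\max}]$. At $s=1/A_{\max}$ we have $G=F<360^\circ$. As $s\to0$, $G\to180^\circ<360^\circ$. So we cannot simply apply the intermediate value theorem; instead we use a different parametrization.

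For that, fix $\rho_j=\pi-\beta$ and let the remaining angles be $\arcsin(A_i\sin\beta/A_j)$ for $i\neq j$, with $\beta\in(0,\pi/2]$. The sum is $H(\beta)=\pi-\beta+\sum_{i\ne j}\arcsin(A_i\sin\beta/A_j)$. At $\beta=\pi/2$ this equals $F(1/A_{\max})<2\pi$. As $\beta\to 0$, $H(\beta)\to\pi<2\pi$. So $H$ never reaches $2\pi$ this way either.

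This shows the inscribed approach has to be reconsidered in this case: the equation to solve is really $\sum\rho_i=2\pi$, with each $\rho_i$ equal to $\arcsin(sA_i)$ or $\pi-\arcsin(sA_i)$. If all acute angles sum to less than $2\pi$, we pick one obtuse angle, and then continuity at $s=1/A_{\max}$ gives a sum of $F(1/A_{\max})$. Near $s\to0$, the sum is $\pi$. I expect this is exactly where the inscribed construction fails when $n=3$, so the plan's fallback is to drop equal corner distances. Choose corner distances $d_i$ freely: prescribe ray angles $\rho_i=360^\circ/n$ (all less than $180^\circ$ for $n\ge3$). The SAS-formula then requires $d_id_{i+1}=2A_i/\sin\rho_i=:B_i$. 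For odd $n$, this cyclic system has the unique positive solution obtained by taking logarithms (a circulant system $x_i+x_{i+1}=b_i$, invertible for odd $n$). For even $n$, it is solvable only when $\prod_{\text{even}}B_i=\prod_{\text{odd}}B_i$, so we perturb the ray angles to achieve this.

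The remaining hard step is convexity, since unequal $d_i$ may create reflex corners. To get around this, I would choose the ray angles adaptively so that the $d_i$ stay within a factor close to $1$. Alternatively, and more cleanly, I would handle the problem by a greedy incremental construction that adjusts one corner distance at a time along its ray while preserving convexity.
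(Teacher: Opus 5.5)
There is a genuine gap: you never establish convexity, and convexity is the whole content of the lemma. Star-shaped realizations with prescribed areas are easy, as your fallback shows.

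Abandoning the inscribed construction is justified in general. For $n=3$ and areas $1,1,100$, no ray angles in $(0^\circ,180^\circ)$ summing to $360^\circ$ have $\sin\rho_i$ proportional to $A_i$. Your argument for abandoning it is not valid, though, for two reasons. First, the endpoint values of $H$ do not show that $H$ stays below $2\pi$: $H$ is concave and can exceed $2\pi$ in the interior, e.g.\ for $A_j=1$ and $47$ further areas equal to $0.1$. Second, up to three ray angles may be obtuse (the equilateral triangle has three), not just one.

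The fallback with $\rho_i=360^\circ/n$ really does produce reflex corners. For $n=5$ and areas $\varepsilon,1,1,1,\varepsilon$, the cyclic system gives $d_0=\sqrt{k}\,\varepsilon$ and $d_1=\dots=d_4=\sqrt{k}$ with $k=2/\sin 72^\circ$. Since $\rho_4+\rho_0=144^\circ<180^\circ$, the corner $c_0$ lies inside triangle $oc_4c_1$ once $\varepsilon<\cos 72^\circ$.

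Your first repair, keeping all $d_i$ within a factor close to $1$, cannot work in general. It forces $\sin\rho_i$ to be nearly proportional to $A_i$, which is the inscribed construction again. For $n=4$ and areas $1,1,1,100$, the three angles with small sines must each lie near $0^\circ$ or $180^\circ$. Then $\rho_3$ lies near a multiple of $180^\circ$, so $\sin\rho_3$ is small too, a contradiction. Your second repair, a greedy adjustment along the rays, is not specified enough to check, and it is exactly where the difficulty lies.

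The missing idea is to let just one corner leave the circle, using the fact that the area of $oc_0c_1$ depends only on the distance of $c_0$ from the line $oc_1$. The paper places $c_1,\dots,c_{n-1}$ on a circle of radius $r$ with $r^2>2A_{\max}/\sin\frac{90^\circ}{n-2}$. Then every ray angle of the fan realizing $A_1,\dots,A_{n-2}$ is below $\frac{90^\circ}{n-2}$, so the whole fan spans less than $90^\circ$. It then takes $c_0$ as the intersection of two lines, both on the side away from the fan: the line parallel to $oc_1$ at distance $2A_0/r$, and the line parallel to $oc_{n-1}$ at distance $2A_{n-1}/r$. All areas are then exact. The two ray angles at $c_0$ sum to more than $270^\circ$, so the angle at $c_0$ is convex, and convexity at $c_1,\dots,c_{n-1}$ is a short check. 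This handles arbitrary area ratios, which no near-inscribed construction can.
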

\begin{proof}
    Let $A_{\max}$ be the maximum among the given areas.   Choose $r$ such that 
    $ r^2 > 2A_{\max}/{\sin{\tfrac{90^\circ}{n-2}}}$.
Then, fix a circle $C$ of radius $r$ centered at the origin, place ${c}_1$ at $(r,0)$, and for $i=2,\dots,n-1$, place ${c}_i$ on $C$ such that 
triangle $T_i$ has area $A_i$; see also Figure~\ref{fig:only_area}.  By the SAS-formula 
$\sin \rho_i  =2A_i/r^2 \leq 2A_{\max}/r^2 < \sin(\frac{90^\circ}{n-2})$, and so ray angle $\rho_i$ is the solution to $\arcsin(2A_i/r^2)$ with  $\rho_i<\frac{90^\circ}{n-2}$.

\begin{figure}[ht]
\centering
    \includegraphics[page=3,scale=1,trim = 50 50 0 0,clip]{figures/only_area.pdf}
    \caption{A convex polygon that realizes given areas. }%
\label{fig:only_area}
\end{figure}

The convex hull of corners ${c}_1,\dots,{c}_{n-1}$ and the origin $o$ forms a fan $F$, and its angle at $o$ is
$\sum_{i=1}^{n-2} \rho_i < 90^\circ$ since $\rho_i<\frac{90^\circ}{n-2}$.
It remains to choose a point $c_0$ to add two triangles to $F$ that realize areas $A_0$ and $A_{n-1}$.   
    Define line ${\ell}_0$ to be parallel to $\overline{oc_1}$ and of distance $2A_0/r$ to it on the side opposite to $c_2$;    placing $c_0$ anywhere on $\ell_0$ would give a triangle $oc_0c_1$ of area $A_0$.
    Similarly define a line $\ell_{n-1}$ parallel to $\overline{oc_{n-1}}$ of distance $2A_{n-1}/r$ on the side opposite to $c_{n-2}$.   
    Using the intersection point of these lines as $c_0$ gives convex angles at $c_0,c_1,c_{n-1}$,
which gives the desired convex polygon.
    \end{proof}

    Note that the proof is algorithmic, and one could easily construct the corresponding polygon in linear time in the RealRAM model.

\subsection{Edge Distance}%
\label{sec:edge_distance-edge_distance}

Consider the scenario where we are only given the edge distances
$\edgedistance_0,\dots,\edgedistance_{n-1}$.

We begin by noting a few easy observations about some of the features in a star-shaped polygon.   Fix one $i\in \{0,\dots,n{-}1\}$.

\begin{observation}%
\label{obs:backward_ray_angle}
Backward ray angle
$\rho_i^-$ is the solution of $\arccos(h_i/d_{i})$ that lies in the range $[0,90^\circ)$.
Forward ray angle
$\rho_i^+$ is the solution of $\arccos(h_i/d_{i+1})$ that lies in the range $[0,90^\circ)$.
\end{observation}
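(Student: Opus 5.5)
The plan is to work entirely inside the backward partial triangle $T_i^-$ and show that it is a right triangle whose hypotenuse is $\overline{oc_i}$ and one leg is $\overline{ox_i}$.

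First, by definition the tangent point $x_i$ is the foot of the perpendicular from the origin $o$ to the line $\mathbf{l}_i$ supporting $e_i$, since the edge ray along $\mathbf{n}_i$ is perpendicular to $\mathbf{l}_i$. Hence $\|x_i\| = \edgedistance_i$, and the angle of $T_i^-$ at $x_i$ is $90^\circ$. The corner $c_i$ lies on $\mathbf{l}_i$ at distance $\cornerdistance_i$ from $o$, so $\overline{oc_i}$ is the hypotenuse. In the right triangle $o x_i c_i$, the angle at $o$ is $\rho_i^-$, the angle between $\mathbf{n}_i$ and $\mathbf{c}_i$. Its adjacent leg has length $\edgedistance_i$ and its hypotenuse has length $\cornerdistance_i$, so $\cos\rho_i^- = \edgedistance_i/\cornerdistance_i$.

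Since $\rho_i^-$ is an angle of a right triangle at a non-right vertex, it lies in $[0,90^\circ)$. It equals $0$ exactly in the degenerate case $c_i = x_i$, which the definition allows. On $[0,90^\circ)$ the cosine is injective, so $\rho_i^-$ is the unique solution of $\arccos(\edgedistance_i/\cornerdistance_i)$ in that range. Note also that $\edgedistance_i \le \cornerdistance_i$ automatically, because the perpendicular is the shortest distance from $o$ to $\mathbf{l}_i$. Thus the arccosine is defined, and the value $\edgedistance_i/\cornerdistance_i$ is positive because the origin is strictly inside $P$.

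The forward statement is symmetric: replace $c_i$ by $c_{i+1}$, which also lies on $\mathbf{l}_i$, and use $T_i^+$. This gives $\cos\rho_i^+ = \edgedistance_i/\cornerdistance_{i+1}$.

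The only step that needs care, rather than being an obstacle, is confirming that the angle of $T_i^-$ at $o$ really is the angle between $\mathbf{n}_i$ and $\mathbf{c}_i$ taken in $[0,90^\circ)$. This holds because $x_i$ lies on the ray along $\mathbf{n}_i$ and $c_i$ lies on the ray along $\mathbf{c}_i$. The two rays meet in a right triangle, so their angle is acute regardless of whether the edge ray hits $e_i$.
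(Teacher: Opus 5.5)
Your proposal is correct and matches the reasoning the paper leaves implicit (it states the observation without proof). $T_i^-$ is right-angled at $x_i$, with hypotenuse $\overline{oc_i}$ of length $d_i$ and leg $h_i$, so $\cos\rho_i^-=\langle \mathbf{n}_i,\mathbf{c}_i\rangle=h_i/d_i$, and the same argument with $c_{i+1}$ gives the forward case.
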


\begin{claim}
We have $d_{i+1}\geq \max\{h_{i+1},h_i\}$.
\end{claim}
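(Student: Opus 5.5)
The plan is to use the fact that the corner $c_{i+1}$ lies on two supporting lines. It is the common endpoint of edges $e_i=\overline{c_ic_{i+1}}$ and $e_{i+1}=\overline{c_{i+1}c_{i+2}}$, so it lies on both $\mathbf{l}_i$ and $\mathbf{l}_{i+1}$. The distance from the origin to any point of a line is at least the distance from the origin to that line. Applying this to $c_{i+1}\in\mathbf{l}_i$ gives $d_{i+1}=\|c_{i+1}\|\geq h_i$. Applying it to $c_{i+1}\in\mathbf{l}_{i+1}$ gives $d_{i+1}\geq h_{i+1}$. Taking the maximum yields the claim.

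Equivalently, one can argue with the tangent points. By definition, $x_i$ is the foot of the perpendicular from $o$ onto $\mathbf{l}_i$. If $c_{i+1}\neq x_i$, the forward partial triangle $T_i^+$ is a right triangle with its right angle at $x_i$. The segment $\overline{oc_{i+1}}$ is then its hypotenuse, so $d_{i+1}\geq h_i$, and this matches Observation~\ref{obs:backward_ray_angle}, since $h_i/d_{i+1}=\cos\rho_i^+\leq 1$. If $c_{i+1}=x_i$, the inequality holds with equality. The same argument with the backward partial triangle $T_{i+1}^-$ at corner $c_{i+1}$ gives $d_{i+1}\geq h_{i+1}$.

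I expect no real obstacle here. The only point to check is that star-shapedness is not needed beyond the features being well defined, because the inequality is purely metric.
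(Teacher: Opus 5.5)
Your proof is correct and is essentially the paper's argument. The paper phrases it as follows: $c_{i+1}$ lies on the lines $\mathbf{l}_i$ and $\mathbf{l}_{i+1}$, which are tangent to the circles of radii $h_i$ and $h_{i+1}$ centered at $o$, so $c_{i+1}$ cannot lie inside either circle. That is exactly your point-to-line distance bound, and your right-triangle variant via $T_i^+$ and $T_{i+1}^-$ is a restatement of the same fact.
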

\begin{proof}
The corner $c_{i+1}$ has distance $d_{i+1}$ from the origin, and lies on two lines tangent to the circles $\tangentcircle_{i+1}$ and $\tangentcircle_i$ of radius $h_{i+1}$ and $h_i$.
This is possible only if it is not inside these circles.   
\end{proof}

\begin{corollary}%
\label{claim:lower_bound_corner_vector}%
\label{claim:big_ray_angle}
The partial ray angle $\rho_i^{-}$ and $\rho_i^{+}$ lie in $[0,90^\circ)$ and respectively satisfy 
$\cos \rho_i^{-}\leq h_i/h_{i-1}$ and
$\cos \rho_i^{+}\leq h_i/h_{i+1}$.
\end{corollary}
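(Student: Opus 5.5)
The corollary follows by combining Observation~\ref{obs:backward_ray_angle} with the preceding claim. Observation~\ref{obs:backward_ray_angle} already places both partial ray angles in $[0,90^\circ)$. It also gives $\cos\rho_i^{-} = h_i/d_i$ and $\cos\rho_i^{+} = h_i/d_{i+1}$. Note that $\rho_i^-$ is the angle at $o$ of the right triangle $T_i^-$, whose right angle is at $x_i$ and whose hypotenuse is $\overline{oc_i}$. For $d_i>0$ the cosine expressions are therefore well defined, since $h_i\le d_i$ holds trivially: the hypotenuse is at least as long as the leg.

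For the backward angle, I apply the claim with index $i-1$ in place of $i$. This gives $d_i \geq \max\{h_i,h_{i-1}\} \geq h_{i-1}$. The quantity $h_i$ is positive, so dividing yields $\cos\rho_i^{-} = h_i/d_i \leq h_i/h_{i-1}$.

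For the forward angle, the claim with index $i$ gives $d_{i+1}\ge \max\{h_{i+1},h_i\}\ge h_{i+1}$. Hence $\cos\rho_i^{+} = h_i/d_{i+1}\le h_i/h_{i+1}$.

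There is no real obstacle; the one point to get right is the index shift in the claim for the backward case, using corner $c_i$, which lies on the lines supporting $e_{i-1}$ and $e_i$. If $h_i/h_{i\pm1}\ge 1$, the bound holds trivially and gives no information. Otherwise it forces the corresponding partial ray angle to be at least $\arccos(h_i/h_{i\pm1})$, which is presumably how it will be used later.
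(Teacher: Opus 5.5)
Your proposal is correct and takes the paper's approach: combine Observation~\ref{obs:backward_ray_angle} ($\cos\rho_i^-=h_i/d_i$, $\cos\rho_i^+=h_i/d_{i+1}$) with the claim $d_{i+1}\ge\max\{h_{i+1},h_i\}$, shifted by one index for the backward case. The paper also cites monotonicity of $\cos$ on $[0,90^\circ)$; that is only needed to read the bounds as lower bounds on the angles, as you note at the end.
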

\begin{proof}   
This is implied by the two previous claims since $\cos$ is monotonically decreasing in the range $[0,90^\circ)$.
\end{proof}

The following theorem (which rephrases $h_0,\dots,h_{n-1}$ as radii of concentric circles) shows that
there always exists a star-shaped realization.

\begin{restatable}{thm}{OnlyEdgeDistancesStarshaped}%
\label{thm:normals_starshaped}%
\label{thm:only_edge_distances_starshaped}
For any given sequence of $n$ concentric circles $\tangentcircle_0,\dots,\tangentcircle_{n-1}$ there exists a star-shaped polygon $P$ with edges $e_0,\dots,e_{n-1}$ such that  for all $i$ the supporting line 
of $e_i$ is tangent to $\tangentcircle_i$.
\end{restatable}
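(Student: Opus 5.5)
The plan is to imitate the fan construction used above for the triangle-area scenario. I would place corners $c_1,\dots,c_{n-1}$ one after another by propagation along tangent lines, keeping them far from the origin compared with all radii. The last corner $c_0$ is then obtained by intersecting two tangent lines, which closes the polygon. Let $H=\max_i h_i$ and fix a radius $D\gg H$, to be chosen at the end. Fix ray angles $\psi_1<\psi_2<\dots<\psi_{n-1}$ with consecutive gaps $\gamma=\psi_{i+1}-\psi_i$ all equal, and total span $\psi_{n-1}-\psi_1$ equal to some value $\Sigma$ strictly between $180^\circ$ and $360^\circ$ (say $\Sigma=240^\circ$). For $n\ge 4$ this gives $\gamma<90^\circ$. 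The case $n=3$ has only one gap and I would treat it separately by choosing $\Sigma$ smaller, e.g.\ $\gamma$ just below $90^\circ$; the closing argument is then the same.

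\emph{Propagation.} Put $c_1$ on ray $\psi_1$ at distance $d_1=D$. Given $c_i$ on ray $\psi_i$ with $d_i\ge D$, take the tangent from $c_i$ to $\tangentcircle_i$ whose tangent point $x_i$ lies at angle $\psi_i+\rho_i^-$, where $\cos\rho_i^-=h_i/d_i$ (this is Observation~\ref{obs:backward_ray_angle} read forwards). Define $c_{i+1}$ as the point where this tangent line meets ray $\psi_{i+1}$. Elementary trigonometry gives $d_{i+1}=h_i/\cos(\rho_i^- -\gamma)$. Since $\rho_i^-\to 90^\circ$ as $d_i/h_i\to\infty$ and $\gamma<90^\circ$ is fixed, for $D$ large enough the angle $\rho_i^- -\gamma$ lies in $(0^\circ,90^\circ)$, so the intersection exists. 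Moreover $d_{i+1}=d_i\cos\rho_i^-/\cos(\rho_i^--\gamma)$, and this ratio is close to $1/\cos(90^\circ-\gamma)\cdot(h_i/d_i)\cdot\ldots$; I would simply verify that it stays bounded below by a constant independent of $D$. An induction then keeps every $d_i\ge D$, possibly after replacing $D$ by $D$ times a constant depending only on $n$ and $\gamma$. By construction the line through $c_i$ and $c_{i+1}$ is tangent to $\tangentcircle_i$, and it does not pass through $o$. Each triangle $o\,c_i\,c_{i+1}$ is positively oriented with ray angle $\gamma$, so the chain $c_1,\dots,c_{n-1}$ is a radially monotone path.

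\emph{Closing.} Let $\mathbf{l}_{n-1}$ be the tangent from $c_{n-1}$ to $\tangentcircle_{n-1}$ on the forward side. Let $\mathbf{l}_0$ be the tangent from $c_1$ to $\tangentcircle_0$ on the backward side, i.e.\ with tangent point at angle $\psi_1-\arccos(h_0/d_1)$. Because $d_1,d_{n-1}\gg H$, these lines are nearly perpendicular to the rays $\psi_1$ and $\psi_{n-1}$, and their normals differ by about $360^\circ-\Sigma\in(0^\circ,180^\circ)$. Hence they intersect in a point $c_0$. A limiting argument places $c_0$ at an angle strictly inside the open gap $(\psi_{n-1},\psi_1+360^\circ)$: as $D\to\infty$ it tends to the bisecting direction at distance about $D/\cos\bigl((360^\circ-\Sigma)/2\bigr)$. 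Therefore the triangles $o\,c_{n-1}\,c_0$ and $o\,c_0\,c_1$ are also positively oriented. The corner rays then appear in strictly increasing angular order, and their ray angles sum to exactly $360^\circ$. Consequently the closed chain $c_0,c_1,\dots,c_{n-1}$ is a radial graph around $o$, so it is a simple polygon, star-shaped from the origin, with each supporting line tangent to the required circle. One must also check that no angle at a corner is flat; this holds because consecutive tangent lines have distinct normals.

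The main obstacle is the quantitative control in the propagation step. One has to make sure that over $n-2$ steps the distances $d_i$ neither collapse toward $H$, which would break the existence of the next intersection, nor blow up in a way that prevents the closing estimate. I would first try to bound the per-step ratio $d_{i+1}/d_i$ between two constants depending only on $\gamma$, using $\rho_i^-\in(90^\circ-\varepsilon,90^\circ)$ with $\varepsilon=O(H/D)$. Failing that, the fallback is to allow the gaps $\gamma_i$ to be chosen adaptively, each in $(0^\circ,90^\circ)$, so that each new corner lands at distance exactly comparable to $D$, while keeping the total span $\Sigma$ in $(180^\circ,360^\circ)$.
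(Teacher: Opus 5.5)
There is a genuine gap, and it sits exactly at the step you flagged as the main obstacle: the propagation cannot keep the corners far from $o$. Your own formula shows this. Since $\rho_i^{-}\in[0^\circ,90^\circ)$ and $\gamma\in(0^\circ,90^\circ)$, we have $\rho_i^{-}-\gamma\in[-\gamma,90^\circ-\gamma)$. Hence $\cos(\rho_i^{-}-\gamma)\ge\min\{\sin\gamma,\cos\gamma\}$, and so $d_{i+1}=h_i/\cos(\rho_i^{-}-\gamma)\le h_i/\min\{\sin\gamma,\cos\gamma\}$ \emph{no matter how large $d_i$ is}. As $d_i\to\infty$ you get $d_{i+1}\to h_i/\sin\gamma$, so $d_{i+1}/d_i\approx (h_i/d_i)/\sin\gamma\to 0$. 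That is the very expression you began to write, and it is not bounded below.

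Your geometric picture is inverted. A line through a point at distance $d_i\gg h_i$ and tangent to $R_i$ makes angle $\arcsin(h_i/d_i)$ with the corner ray. So it is nearly \emph{radial}, passes within $h_i$ of $o$, and meets the next ray near $o$. More generally, a segment whose endpoints have distance at least $D'$ from $o$, and which subtends an angle $\gamma'\in(0^\circ,180^\circ)$ at $o$, lies on a line at distance at least $D'\min\{\sin\gamma',\cos(\gamma'/2)\}$ from $o$. So corners that all stay at distance comparable to $D\gg H$ force every gap in $(0^\circ,90^\circ)$ to be $O(H/D)$. This also rules out your adaptive fallback, whichever tangent you use.

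Concretely, take $n=5$, $\Sigma=240^\circ$ (so $\gamma=80^\circ$), $h_1=1$ and $h_2=100$. Then $d_2\le 1/\cos 80^\circ<6<h_2$, so $c_2$ lies inside $R_2$ and has no tangent to $R_2$ at all, for every $D$.

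The closing step rests on the same misconception. Even if $d_1,d_{n-1}\gg H$ held, $\mathbf{l}_0$ and $\mathbf{l}_{n-1}$ would be nearly radial. Two lines at distance at most $H$ from $o$ that cross at an angle of about $\Sigma-180^\circ$ meet within distance $O(H)$ of $o$, not at distance about $D/\cos((360^\circ-\Sigma)/2)$. (Minor: for $n=4$ there are only two gaps, so $\Sigma>180^\circ$ forces $\gamma>90^\circ$.)

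What is missing is a mechanism that adapts to the radii. When a small radius is followed by a much larger one, the shared corner must end up outside the larger circle. In general this forces reflex corners: convex realizations need not exist, e.g.\ for $n\ge 6$ edge distances alternating between values above $1$ and below $\tfrac12$. And the polygon must still close up.

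The paper sidesteps this global bookkeeping by induction on $n$, with the stronger invariant that every corner lies strictly outside every circle. It removes a circle $R_1$ of minimum radius, arranged so that $h_0>h_1$ (after possibly reversing the order, and handling all-equal radii by a regular polygon). It realizes the remaining $n-1$ circles and then reinserts $e_1$ locally at the common corner $p$ of $e_0$ and $e_2$. To do so, pick $r$ in the interior of $e_0$ very close to $p$ and cut along a tangent from $r$ to $R_1$: the one keeping $o$ and $p$ on the same side if $p$ is convex, and the one separating them if $p$ is reflex. Since $h_1<h_0$, this tangent differs from the line of $e_0$. Its intersection $q$ with the line of $e_2$ tends to $p$ as $r\to p$, so $r$ and $q$ stay outside all circles and the rerouted polygon remains star-shaped. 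The invariant is what guarantees that the required tangents exist---precisely the point where your propagation breaks down.
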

\begin{proof} The proof is by induction on $n$.
We prove the stronger claim that 
all vertices of $P$ lie strictly outside of all circles. For the base case $n=3$, assume that $R_0$ has maximum radius.
A vertical line tangent to  $\tangentcircle_0$ on the right, 
a horizontal line tangent to $\tangentcircle_2$ on the bottom and a slanted
line tangent to $\tangentcircle_1$ clearly suffice to define $P$ as desired 
if we choose the slope of the slanted line very small or large as needed to get all vertices outside of $R_0$.
See \cref{fig:tangent}.

Assume now that the theorem holds for $n=k-1$. 
If all radii are the same, then we draw the polygon as a regular polygon
and scale it such that all edges are tangent to the common circle.
Otherwise, we may assume
that $\tangentcircle_1$ is one of the circles with minimal 
radius and that the radii of  $\tangentcircle_0$, $\tangentcircle_1$ and $\tangentcircle_2$ are not all the same. We also may assume that the radius of $\tangentcircle_0$ is not smaller than
 the radius of $\tangentcircle_2$,
otherwise we reverse the order of circles and mirror the polygon in the end. 
Consequently, the radius of $\tangentcircle_0$ is strictly larger 
than the radius of $\tangentcircle_1$.
By the induction hypothesis, we can realize a star-shaped polygon $P'$,
whose edges are tangent to $\tangentcircle_0,\tangentcircle_2,\ldots, \tangentcircle_{k-1}$.
We show how to modify $P'$ to get to $P$. 
To add an edge $e_1$ for the omitted circle $\tangentcircle_1$, we look at the adjacent edges $e_{0}$ and $e_{2}$ in $P'$ and their common endpoint $p$. 
There are two cases based on whether the corner at $p$ is convex or reflex; see Figure~\ref{fig:tangent}. 
In both cases, we shorten $e_0$ (to a new corner $r$), lengthen $e_2$ (to a new corner $q$), and ensure
that $\overline{rq}$ lies on a line $\mathbf{l}_1$ tangent to $\tangentcircle_1$ with the origin on the correct side
for star-shapedness.    More specifically, pick a point $r$ in the interior of $e_0$, but close enough to $p$
so that it (like $p$) is outside all circles.    Let $\mathbf{l}_1$ be the line tangent to $\tangentcircle_1$
and through $r$; in the convex case we take the one that has the origin and $p$ on the same side, while in
the reflex case we take the one that separates the origin from $p$.   
Note that this tangent is
different from the supporting line of $e_0$, since $\tangentcircle_1$ has strictly smaller radius than $\tangentcircle_0$.
Also, the intersection point $q$ between $\mathbf{l}_1$ and the supporting line of $e_2$ is arbitrarily close to $p$ if we pick $r$ sufficiently close to $p$; 
in particular we can pick 
$r$ such that $q$ is outside all circles as well.    One verifies that in both cases
the polygon obtained by re-routing via $r$ and $q$ is star-shaped, see also Figure~\ref{fig:tangent}.
With this the induction step is proved.
\end{proof}

\begin{figure}
    \centering
    \includegraphics[scale=1, page=1]{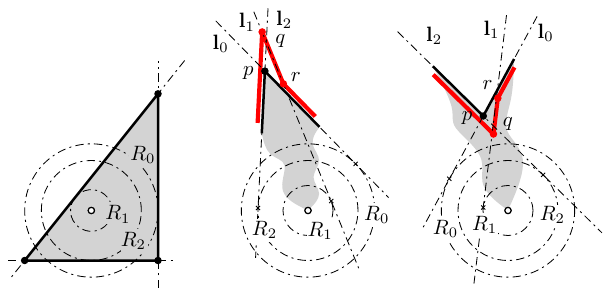}
    \caption{Inductive construction of Theorem~\ref{thm:normals_starshaped}.  Left: Base case. Middle/Right: Schematic inductive step at a convex/reflex corner of $P$. }%
    \label{fig:tangent}
\end{figure}

Convex realizations do not always exist, but we can 
characterize when they exist, presuming edge distances are distinct.  
To state the characterization, we first need some definitions.
Fix an instance $\calI$, i.e., a circular list $h_0,\dots,h_{n-1}$ of distinct edge distances. 
A \emph{local minimum [maximum]} is an edge distance that is smaller [larger] than both its predecessor and its successor in this circular list.   
If there are at least two local minima, then 
the \emph{reduced instance} $\calI'$
consists of all those edge distances (in the induced order) that are local minima or local maxima of $\calI$.   If there is only one local minimum, then $\calI'$ consists of it, the (unique) local maximum, and one edge distance between the two.   
Enumerate $\calI'$ as $h'_0,\dots,h'_{k-1}$, where $h'_0$ is the global minimum of the edge distances.
The case $|\calI'|=3$ is a special case handled separately.   For $|\calI'|\geq 4$, we have $k$ even and the items in $\calI'$ alternate between local minimum and local maximum.
We now have the following equivalences.

\begin{restatable}{thm}{OnlyEdgeDistancesConvex}%
\label{thm:only_edge_distances_convex}
The following are equivalent for an instance $\calI=(h_0,\dots,h_{n-1})$ 
of the problem of realizing distinct edge distances:
\begin{enumerate}
\itemsep -2pt
\item $\calI$ has a convex realization.%
\label{it:I_convex}
\item The reduced instance $\calI'$ has a convex realization.%
\label{it:Iprime_convex}
\item $\calI'=(h'_0,\dots,h'_{k-1})$ either has size 3, or satisfies
	$\sum_{j=0}^{k/2-1} (\psi_{2j}^{+1}+\psi_{2j}^{-1}) < 360^\circ$
	where $\psi_{2j}^{\ell}$  (for $\ell=\pm 1$)
	is the solution to $\arccos(h'_{2j}/h'_{2j+\ell})$ in $(0,90^\circ)$.%
\label{it:angle_sum}
\item $\calI'$ has a convex realization where edge rays hit the interior of their edges.%
\label{it:Iprime_RayHitEdge}
\end{enumerate}
\end{restatable}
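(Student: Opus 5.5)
I would prove the cycle of implications \ref{it:I_convex}$\Rightarrow$\ref{it:angle_sum}$\Rightarrow$\ref{it:Iprime_RayHitEdge}$\Rightarrow$\ref{it:Iprime_convex}$\Rightarrow$\ref{it:I_convex}. The step \ref{it:Iprime_RayHitEdge}$\Rightarrow$\ref{it:Iprime_convex} is trivial. The guiding picture is that a convex realization is determined by its edge normals $\mathbf{n}_i$, which must be in strictly increasing circular order with consecutive angles less than $180^\circ$, and the lines $\mathbf{l}_i$ at distance $h_i$. Under convexity each corner $c_{i+1}$ lies outside both tangent circles. So the angle between $\mathbf{n}_i$ and $\mathbf{n}_{i+1}$ equals $\rho_i^+ + \rho_{i+1}^-$, with signs if an edge ray misses its edge. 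By Corollary~\ref{claim:lower_bound_corner_vector}, the partial angle on the side of the smaller distance is at least $\arccos(h_{\min}/h_{\max})$ of the two adjacent distances.

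For \ref{it:I_convex}$\Rightarrow$\ref{it:angle_sum}, assume $|\calI'|\ge 4$ and fix a convex realization of $\calI$. I claim the normal angle swept from a local maximum $h'_{2j-1}$ to the local minimum $h'_{2j}$ and on to $h'_{2j+1}$ is at least $\psi_{2j}^{-1}+\psi_{2j}^{+1}$. For the rising side, take the monotone run $h'_{2j}=h_a<h_{a+1}<\dots<h_b=h'_{2j+1}$ and let $x_a,x_b$ be the tangent points. I would show geometrically that the angle between $\mathbf{n}_a$ and $\mathbf{n}_b$ is at least $\arccos(h_a/h_b)$. The reason is that the line $\mathbf{l}_b$ is tangent to $\tangentcircle_b$, and convexity forces the polygon between edges $a$ and $b$ to stay within the halfplane of $\mathbf{l}_a$. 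Concretely, the chain of lines $\mathbf{l}_a,\dots,\mathbf{l}_b$ has monotonically increasing distances, and one can telescope: the intersection of $\mathbf{l}_a$ with the outer line lies outside $\tangentcircle_b$. This could be argued by induction on the run length, replacing intermediate lines since increasing distances only push corners outward. Summing over all $k/2$ minima, and noting that the swept arcs are disjoint and all turning is positive, gives total angle $\le 360^\circ$. Strictness comes from flat angles being disallowed, since the polygon must actually turn at the maxima. This monotone-run lemma is the main obstacle, and I would first try to prove it via the support function: $h(\theta)$ restricted to the normals is that of a convex body, so $h_b\le h_a/\cos(\angle)$ when the body lies in the strip. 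The case $|\calI'|=3$ needs no condition, since it will be handled as always realizable below.

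For \ref{it:angle_sum}$\Rightarrow$\ref{it:Iprime_RayHitEdge}, I construct $\calI'$ explicitly. Place $\mathbf{n}'_{2j}$ for the minima, and put each maximum normal at angle slightly more than $\psi_{2j}^{+1}$ after the preceding minimum. The slack $360^\circ-\sum(\psi^{+1}+\psi^{-1})>0$ is distributed as small positive extras $\varepsilon$ on every gap. Then the corner between a minimum line and a maximum line lies at angle $\psi+\varepsilon$ from the minimum normal, which is just beyond the tangent point of the maximum. Each edge ray therefore hits its edge interior, and all turns are in $(0^\circ,180^\circ)$, so the polygon is convex. For size 3, choose the three normals generically to obtain a triangle with tangent points on the edges, such as an acute configuration after scaling angles.

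Finally, for \ref{it:Iprime_convex}$\Rightarrow$\ref{it:I_convex}, I first pass from \ref{it:Iprime_convex} to \ref{it:Iprime_RayHitEdge} via \ref{it:angle_sum} using the argument above applied to $\calI'$. I then reinsert the omitted distances one at a time, mirroring the induction of Theorem~\ref{thm:only_edge_distances_starshaped} but keeping convexity. An omitted $h_i$ lies strictly between its neighbours in a monotone run. At the corner $p$ between an edge of smaller distance and one of larger distance, I cut off $p$ by a line tangent to $\tangentcircle_i$. This line exists near $p$ with slope between the two neighbours, because $\|p\|$ exceeds both radii and the radius of $\tangentcircle_i$ lies in between. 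The cut keeps convexity and, taken small, preserves edge rays hitting edges.
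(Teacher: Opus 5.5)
Your step (\ref{it:I_convex})$\Rightarrow$(\ref{it:angle_sum}) rests on a false lemma. You reuse the same argument for (\ref{it:Iprime_convex})$\Rightarrow$(\ref{it:angle_sum}) on the way back to (\ref{it:I_convex}), so the cycle does not close. It is not true that along a rising run from a local minimum $h_a$ to the next local maximum $h_b$ the \emph{normals} $\mathbf{n}_a,\mathbf{n}_b$ are at least $\arccos(h_a/h_b)$ apart, even for $b=a+1$. Take the quadrilateral cut out by $x\le 60$ (normal $0^\circ$), $y\le 1$ (normal $90^\circ$), the half-plane with normal $91^\circ$ and distance $2$, and the half-plane with normal $268^\circ$ and distance $1.5$. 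Its corners are approximately $(60,1)$, $(-57.3,1)$, $(-66.9,0.83)$ and $(60,-3.6)$. So it is a convex realization of $(60,1,2,1.5)$, which is its own reduced instance. Yet the minimum $1$ and the following maximum $2$ have normals only $1^\circ$ apart, not $\arccos(1/2)=60^\circ$. Your per-arc claim fails too: the normal arc from the maximum $60$ to the maximum $2$ is $91^\circ$, against $\arccos(1/60)+60^\circ\approx 149^\circ$. The reason is that the distance-$2$ edge lies entirely beyond the corner $(-57.3,1)$, so its edge ray misses it. Nothing confines the polygon to a strip, so your support-function inequality has no basis. Corollary~\ref{claim:lower_bound_corner_vector} bounds the angle between a normal and a \emph{corner ray} (here $\mathbf{n}_a$ and $\mathbf{c}_{a+1}$ are about $89^\circ$ apart), not between two normals. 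Your strictness argument also cannot work, since normal arcs between consecutive maxima tile exactly $360^\circ$.

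The fix is to measure against corner rays. In a convex realization of $\calI$ with $k\ge 4$, let $a$ be a local minimum, and $a'$, $b$ the preceding and following local maxima. The corner $c_b$ lies on $\mathbf{l}_b$ and in the half-plane of $e_a$. So $\langle c_b,\mathbf{n}_a\rangle\le h_a$ and $\|c_b\|\ge h_b$, hence $\mathbf{c}_b$ is at angle at least $\psi^{+1}$ from $\mathbf{n}_a$; symmetrically, $\mathbf{c}_{a'+1}$ is at angle at least $\psi^{-1}$. Claim~\ref{claim:RayHitEdge} places $\mathbf{n}_a$ between $\mathbf{c}_a$ and $\mathbf{c}_{a+1}$. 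Therefore the ccw arc from $\mathbf{c}_{a'+1}$ to $\mathbf{c}_b$ is at least $\psi^{-1}+\psi^{+1}$. These arcs are disjoint over the minima and omit the positive ray angles $\rho_b$ at the maxima, which gives the strict inequality. This is essentially the paper's argument. The paper first passes to a realization of $\calI'$ (intersecting only the half-planes of $\calI'$, and replacing one normal if the intersection is unbounded), so that minima and maxima are adjacent and $\rho_{2j}=\rho_{2j}^-+\rho_{2j}^+$.

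Your other steps are sound and partly differ from the paper. For (\ref{it:angle_sum})$\Rightarrow$(\ref{it:Iprime_RayHitEdge}), choose each gap between a minimum normal and an adjacent maximum normal larger than the corresponding $\psi$ and below $180^\circ$, with total $360^\circ$. This does realize $\calI'$ with all edge rays hitting their edges, and it is simpler than the paper's $\varepsilon$-triangles followed by corner cuts. However, the min--max corner then lies at an angle strictly between $\psi$ and the gap from the minimum normal, i.e.\ \emph{before} the maximum's tangent point, not beyond it as you wrote. For (\ref{it:Iprime_convex})$\Rightarrow$(\ref{it:I_convex}), your small tangent corner cuts reinsert the monotone runs starting from \emph{any} convex realization of $\calI'$. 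So apply them directly rather than routing through (\ref{it:angle_sum}). The paper needs (\ref{it:Iprime_RayHitEdge}) there only because its sliding-segment argument uses the tangent points.
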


Before proving the theorem, we note that Condition~(\ref{it:angle_sum}) is easily tested in linear time, as long as the computer model supports the computation of $\arccos$.
Also, it is easy to construct instances where this condition is violated:
Let $\calI$ be an instance with $n\geq 6$ edge distances alternating between more than 1 and less than $\tfrac{1}{2}$.   Every edge distance is a local minimum or maximum, and $\psi_{2j}^\ell>60^\circ$ for all indices $2j$ and $\ell$, so the angle-sum in (\ref{it:angle_sum}) exceeds $360^\circ$ for $n\geq 6$.  

Let us prove a couple of observations needed for the proof of Theorem~\ref{thm:only_edge_distances_convex}.

\begin{claim}%
\label[claim]{claim:RayHitEdge}
If a sequence of edge distances is realizable by a convex polygon, then for any index $i$ with 
$h_i\leq \min\{h_{i-1},h_{i+1}\}$,     the edge ray through $\mathbf{n}_i$ hits $e_i$.
\end{claim}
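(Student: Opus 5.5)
The idea is to use convexity to place the polygon inside the strip bounded by the supporting lines of $e_{i-1}$ and $e_{i+1}$, and then compare distances to the origin along the edge ray. Let $x_i$ be the tangent point of $e_i$, i.e., the foot of the perpendicular from the origin to $\mathbf{l}_i$; the edge ray along $\mathbf{n}_i$ meets $\mathbf{l}_i$ exactly at $x_i$. So we must show that $x_i$ lies on the segment $\overline{c_ic_{i+1}}$. We argue by contradiction and, by symmetry, assume that $x_i$ lies beyond $c_{i+1}$ on $\mathbf{l}_i$. The case beyond $c_i$ is the mirror image and uses $h_{i-1}$ in place of $h_{i+1}$.

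First, since $P$ is convex with the origin inside, $P$ lies in the half-plane $H_{i+1}=\{p:\langle p,\mathbf{n}_{i+1}\rangle\le h_{i+1}\}$. Every point of $\mathbf{l}_i$ on the far side of $c_{i+1}$, i.e., past $c_{i+1}$ in direction $c_{i+1}-c_i$, lies strictly outside $H_{i+1}$. To see this, note that $\mathbf{l}_i$ crosses $\mathbf{l}_{i+1}$ at $c_{i+1}$. The direction $c_{i+1}-c_i$ enters $H_{i+1}$ when traversed backwards, because $c_i\in P\subseteq H_{i+1}$, and $c_i\notin\mathbf{l}_{i+1}$ since there are no flat angles. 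Hence the function $t\mapsto\langle c_{i+1}+t(c_{i+1}-c_i),\mathbf{n}_{i+1}\rangle$ is strictly increasing and equals $h_{i+1}$ at $t=0$.

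Under our assumption, $x_i$ is such a point, so $\langle x_i,\mathbf{n}_{i+1}\rangle>h_{i+1}$. By Cauchy--Schwarz, $\langle x_i,\mathbf{n}_{i+1}\rangle\le \|x_i\|=h_i$. This gives $h_i>h_{i+1}$, contradicting $h_i\le h_{i+1}$. The case $x_i=c_{i+1}$ also needs care if the claim means hitting the edge including its endpoints. Here the same inequality gives $h_{i+1}=\langle c_{i+1},\mathbf{n}_{i+1}\rangle\le h_i$, so $h_i=h_{i+1}$ and $x_i$ lies on the closed edge; this is fine. For the interior version with distinct distances, equality would force $\mathbf{n}_{i+1}=\mathbf{n}_i$, which is impossible.

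The main obstacle is justifying cleanly that the part of $\mathbf{l}_i$ beyond $c_{i+1}$ lies outside $H_{i+1}$. This uses convexity together with strict positivity at the corner: $c_i$ lies strictly inside $H_{i+1}$, which follows from the absence of flat angles. Everything else is a one-line inequality. The mirrored case proceeds identically, with $c_i$, $\mathbf{n}_{i-1}$ and $h_{i-1}$ in place of $c_{i+1}$, $\mathbf{n}_{i+1}$ and $h_{i+1}$.
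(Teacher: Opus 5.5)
Your proof is correct. It rests on the same geometric fact as the paper's: the tangent point $x_i$ lies on the circle $R_i$, which is enclosed by the circle of the neighbouring edge, and this conflicts with convexity at the corner beyond which $x_i$ would sit. The execution differs, though. The paper rotates so that $\mathbf{n}_i$ points upward and assumes $e_i$ lies strictly to one side of $x_i$. It then considers the two possible tangent lines from $c_i$ to $R_{i-1}$ and asserts (``one verifies'') that each yields an angle of at least $180^\circ$ at $c_i$.

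You avoid the choice of tangent line entirely. You use only three facts: $P$ lies in the supporting half-plane $H_{i+1}$; $\mathbf{l}_i$ leaves $H_{i+1}$ strictly past $c_{i+1}$, since $c_i$ lies strictly inside $H_{i+1}$ because there are no flat angles; and $\langle x_i,\mathbf{n}_{i+1}\rangle\le\|x_i\|=h_i\le h_{i+1}$ keeps $x_i$ inside $H_{i+1}$. This argument is coordinate-free and needs no case split. In effect it makes the paper's ``one verifies'' step rigorous. The paper's version is more pictorial and fits the tangent-circle viewpoint used throughout that section.

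One small simplification: the qualifier ``with distinct distances'' in your endpoint discussion is unnecessary. If $x_i=c_{i+1}$, then $h_{i+1}=\langle x_i,\mathbf{n}_{i+1}\rangle\le\|x_i\|=h_i\le h_{i+1}$. Equality in Cauchy--Schwarz then forces $\mathbf{n}_{i+1}=\mathbf{n}_i$ and $h_{i+1}=h_i$, so $\mathbf{l}_i=\mathbf{l}_{i+1}$ and the angle at $c_{i+1}$ would be flat. Your argument therefore shows that the edge ray always hits the relative interior of $e_i$, regardless of whether the distances are distinct.
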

\begin{proof}
Up to rotation we may assume that $\mathbf{n}_i$ goes vertically upward from $o$, 
so the tangent point $x_i$ is $(0,h_i)$.
See also Figure~\ref{fig:normal-in-edge-two-invalid-choices}.
Assume for contradiction that 
$e_i$ lies strictly left of $(0,h_i)$ (the case of `strictly right' is symmetric).
Then $c_{i}$ lies to the left of $(0,h_i)$.   It also lies on $\mathbf{l}_{i-1}$, which is tangent to circle $\tangentcircle_{i-1}$, which encloses $\tangentcircle_i$ (or is identical to it) by $h_i\leq h_{i-1}$.    There are two choices for such a tangent-line, defining two possible ways in which $e_{i-1}$ arrives at $c_i$ since the origin must be to the left of $e_{i-1}$ when walking from $c_{i-1}$ to $c_{i}$.  %
One verifies that both of those lead to an angle that is at least $180^\circ$ at $c_{i}$.   
\end{proof}

\begin{figure}[ht]
    \centering{}
    \includegraphics[page=2,trim=0 110 0 0,clip]{figures/normal-ray-must-hit-ei.ipe.pdf}
    \caption{At an edge with locally minimal edge distance, the edge ray must hit the edge.}%
    \label{fig:normal-in-edge-two-invalid-choices}
\end{figure}

\begin{lemma}%
\label[lemma]{lemma:small_n}
For $n=3,4$, any set of edge distances can be realized with a convex polygon where edge rays intersect the interior of their corresponding edges.
\end{lemma}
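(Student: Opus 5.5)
The plan is to prescribe the edge normals ourselves and define the polygon as an intersection of half-planes, so that each edge has the required distance automatically. The only thing left to check is that every half-plane contributes an edge whose tangent point lies in the interior of that edge. For $n=3$ I would take $\mathbf{n}_0,\mathbf{n}_1,\mathbf{n}_2$ to be unit vectors $120^\circ$ apart in ccw order. For $n=4$ I would take the four axis directions in ccw order, $90^\circ$ apart. Then set
$$P=\bigcap_{i}\{x : \langle x,\mathbf{n}_i\rangle \le h_i\}.$$
The normals are not contained in any closed half-plane, i.e.\ they positively span the plane, so $P$ is bounded. Since all $h_i>0$, the origin lies strictly inside $P$, and $P$ is convex.

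The key step is to show that the tangent point $x_i=h_i\mathbf{n}_i$ lies strictly inside every other half-plane, i.e.\ $\langle x_i,\mathbf{n}_j\rangle = h_i\cos\angle(\mathbf{n}_i,\mathbf{n}_j) < h_j$ for all $j\neq i$. With the chosen directions, every pair of distinct normals makes an angle of $90^\circ$, $120^\circ$ or $180^\circ$, so the cosine is $0$, $-\tfrac12$ or $-1$. Hence the left-hand side is at most $0<h_j$, and this holds no matter what the given values $h_0,\dots,h_{n-1}$ are. This is the reason we restrict to $n\le 4$: with $n\ge 5$ some pair of normals would have to be less than $90^\circ$ apart, and then the inequality genuinely depends on the ratios of the $h_i$.

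From this inequality I would conclude the following. The point $x_i$ lies on $\mathbf{l}_i$ and strictly inside all other half-planes. Therefore a neighbourhood of $x_i$ within $\mathbf{l}_i$ is contained in $\partial P$, so $\mathbf{l}_i$ supports a genuine edge $e_i$ of positive length, and $x_i$ lies in the interior of $e_i$. In other words, the edge ray along $\mathbf{n}_i$ hits the interior of $e_i$, and the distance of $\mathbf{l}_i$ from the origin is $h_i$ by construction. The edges of a convex polygon appear around it in the circular order of their outer normals. Since the normals are distinct and in ccw order, $P$ has exactly $n$ edges in the prescribed order, and no corner is flat, because consecutive normals differ.

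The main potential obstacle is only the verification that each half-plane really yields an edge, rather than being redundant or meeting $P$ in a single corner. The inequality $\langle x_i,\mathbf{n}_j\rangle<h_j$ handles this uniformly, so I do not expect any real difficulty once the normals are chosen as above.
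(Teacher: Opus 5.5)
Your proof is correct and follows essentially the paper's argument. For $n=4$ the paper uses exactly your axis-parallel rectangle, and for $n=3$ it reuses the base-case triangle of Theorem~\ref{thm:normals_starshaped}, with normals pointing right, down, and into the second quadrant; like your $120^\circ$ configuration, this has every pair of normals at least $90^\circ$ apart. Your explicit inequality $h_i\cos\angle(\mathbf{n}_i,\mathbf{n}_j)\le 0<h_j$ supplies the verification of the edge-ray-hitting property that the paper leaves to the figure.
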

\begin{proof}
    For $n=3$, the base case of Theorem~\ref{thm:normals_starshaped} creates a suitable convex polygon.   For $n=4$, consider the four tangent-circles, and add tangent-lines that are alternatingly horizontal and vertical, and are below/right/above/left of the origin.
The intersection of the corresponding halfplanes defines a suitable realization with a rectangle, see Figure~\ref{fig:edge_distance_n=4}.
\end{proof}

\begin{figure}[ht]
\hspace*{\fill}
\includegraphics[scale=1,page=4,trim=0 0 0 0 ,clip]{figures/normals_starshaped.pdf}
\hspace*{\fill}
\caption{Constructing a rectangle for any four given edge distances.}%
\label{fig:edge_distance_n=4}
\end{figure}

\begin{proof}[Proof of Theorem~\ref{thm:only_edge_distances_convex}]
\paragraph{(\ref{it:I_convex})$\Rightarrow$(\ref{it:Iprime_convex}).}   This implication at first glance seems trivial, since
the edge distances for $\calI'$ are a subsequence of the ones for $\calI$.   Let $H_0,\dots,H_{n-1}$ be the
half-planes that define a realization $P$ of $\calI$, i.e., $H_i=\{p: \langle \mathbf{n}_i, p \rangle\leq h_i\}$.
Let $P'=\bigcap_{i\in I} H_i$, where $I$ is the set of indices in $\calI'$.    Every edge $e_i$ with $i\in I$ is
part of an edge in $P'$, and has the same edge distance, so $P'$ looks like a realization of $\calI'$.   The only problem 
is that $P'$ may be unbounded, or equivalently, the edge normals of $P'$ (enumerate them as $\mathbf{n}'_0,\dots,\mathbf{n}'_{k-1}$) all lie in one half-space.
See also Figure~\ref{fig:onlyEdgeDistancesConvexCloseUp}.

\begin{figure}[ht] \centering
\includegraphics[page=7,trim=0 0 0 0,clip]{figures/only_edge_distances.pdf}
\caption{Moving the normal for one half-plane to make their intersection bounded.}%
\label{fig:onlyEdgeDistancesConvexCloseUp}
\end{figure}

So assume that (up to renaming) the ccw angle
between $\mathbf{n}'_0$ and $\mathbf{n}'_1$ is at least $180^\circ$.
We have $k\geq 3$, so $\mathbf{n}'_2$ is different from $\mathbf{n}'_0$.    
Define $\mathbf{n}^{\rm new}_1:=
-\tfrac{1}{2}(\mathbf{n}'_0+\mathbf{n}'_2)$ and note that this lies between 
$\mathbf{n}'_0$ and $\mathbf{n}'_2$ in ccw order and forms angles less than $180^\circ$ with both.  If we hence remove from $P'$ the half-plane defined by $\mathbf{n}'_1$ and add to it the half-plane defined by $\mathbf{n}^{\rm new}_1$ and the corresponding edge distance, then the intersection of these half-spaces is a realization of $\calI'$.

\medskip

\paragraph{(\ref{it:Iprime_convex})$\Rightarrow$(\ref{it:angle_sum}).}   
Assume that $k> 3$ (so $k$ is even since we have alternating local minima and maxima), and  fix a convex realization of $\calI'$.
At every local minimum $h'_{{2j}}$ the edge ray hits edge $e_{{2j}}$ by \cref{claim:RayHitEdge}, and hence the ray angle is
$\rho_{{2j}}=\rho_{{2j}}^{+}+\rho_{{2j}}^{-}$.
We know that
$\rho_{{2j}}^{+}\geq \psi_{{2j}}^{+1}$  and
$\rho_{{2j}}^{-}\geq \psi_{{2j}}^{-1}$   
by Corollary~\ref{claim:lower_bound_corner_vector}.
Hence, $\rho_{{2j}}\geq \psi_{{2j}}^{+} +\psi_{{2j}}^{-}$   
and $\sum_{j=0}^{k/2-1} \rho_{{2j}}\geq \sum_{j=0}^{k/2-1} (\psi_{{2j}}^{+1} +\psi_{{2j}}^{-1})$.   
On the other hand $\sum_{j=0}^{k/2-1} \rho_{{2j}}<360^\circ$  since the sum does not count the ray angles corresponding to local maxima, which are non-zero.

\medskip

\paragraph{(\ref{it:angle_sum})$\Rightarrow$(\ref{it:Iprime_RayHitEdge}).}     The claim is trivial for $k=3,4$ following \cref{lemma:small_n}, so we may assume that $k>4$.
Since 
$k$ is even, we actually have $k\geq 6$.  
For $\varepsilon>0$, $\ell\in \{+1,-1\}$ and $j\in \{0,\dots,k/2 - 1\}$, define 
$$\rho_{2j}^\ell(\varepsilon):=\arccos \left(\frac{\edgedistance'_{2j}}{\edgedistance'_{2j+\ell}+\varepsilon}\right),$$ 
where we mean the solution in $(0,90^\circ)$.
In other words, $\rho_{2j}^\ell(\varepsilon)$ is the angle of a right triangle $T_{2j}^\ell$ where the adjacent side has length $\edgedistance'_{2j}$ and the hypotenuse has length $\edgedistance'_{2j+\ell}+\varepsilon$, see also 
Figure~\ref{fig:onlyEdgeDistancesConvexConstruct}.
Angle $\rho_{2j}^\ell(\varepsilon)$ increases monotonically with $\varepsilon$, and $\rho_{2j}^\ell(0)=\psi_{2j}^\ell$ while $\rho_{2j}^\ell(\varepsilon)\rightarrow 90^\circ$ as $\varepsilon\rightarrow \infty$.   
Since  $\sum_{j=0}^{k/2-1} (\psi_{2j}^{+1}+\psi_{2j}^{-1})$ is strictly less than $360^\circ$, 
while we have $k\geq 6$ such angles in total, there exists an $\varepsilon>0$
such that 
$\sum_{j=0}^{k/2-1} (\rho_{2j}^{+1}(\varepsilon)+\rho_{2j}^{-1}(\varepsilon))=360^\circ$, 
and we use this $\varepsilon$ for the construction (and drop it from now on).

\begin{figure}[ht] \centering
\includegraphics[page=2]{figures/only_edge_distances.pdf}
\caption{We can choose $\varepsilon$ such that the sum of partial ray angles is exactly $360^\circ$.}%
\label{fig:onlyEdgeDistancesConvexConstruct}
\end{figure}

Place triangles $T_{0}^{-},T_{0}^{+},T_{2}^{-},\dots,T_{k-2}^{+}$ so that they share the corner $o$, and such that $T_{2j}^{-}$ and $T_{2j}^{+}$ share the side of length $h'_{2j}$ while $T_{2j}^{+}$ and $T_{2j+2}^{-}$ share the side of length $h'_{2j+1}+\varepsilon$.   See also Figure~\ref{fig:onlyEdgeDistancesConvexMerge}.
By our choice of $\varepsilon$ this fills the sector around $o$ exactly, and the opposite sides of $T_{2j}^{-}$ and $T_{2j}^{+}$ become the
edge $e_{2j}$ of a convex polygon $P''$ that 
realizes $h'_0,h'_2,\dots,h'_{k-2}$.
The edge ray of $e_{2j}$ goes along the side shared by $T_{2j}^{-}$ and $T_{2j}^{+}$, and in particular therefore hits $e_{2j}$.

\begin{figure}[ht] \centering
\includegraphics[page=3,scale=0.7]{figures/only_edge_distances.pdf}
\caption{Combining triangles to get a realization of the local minima.}%
\label{fig:onlyEdgeDistancesConvexMerge}
\end{figure}

It remains to create edges for the local maxima $\edgedistance'_1,\edgedistance'_3,\dots,\edgedistance'_{k-1}$.
See Figure~\ref{fig:onlyEdgeDistancesConvexOdd}.
For $i=0,\dots,k/2 - 1$, the common end $c$ of edges $e_{2j}$ and $e_{2j+2}$ has 
distance $\edgedistance'_{2j+1}+\varepsilon$ from the origin; in particular the corner ray of $c$ intersects first the two tangent-circles $\tangentcircle_{2j}$ and $\tangentcircle_{2j+2}$ (in some order), then it intersects $\tangentcircle_{2j+1}$ (since $h'_{2j+1}$ is a local maximum), and \emph{then} it intersects $c$.   
Let $x$ be the place where this ray intersects $\tangentcircle_{2j+1}$, and cut off $c$ with a line perpendicular to the ray and through $x$ (hence tangent to $\tangentcircle_{2j+1}$).   Since $x$ is outside $\tangentcircle_{2j},\tangentcircle_{2j+2}$, the tangent points of $e_{2j}$ and $e_{2j+2}$ remain part of the polygon.
Also, the edge rays of $e_{2j}$ and $e_{2j+2}$ separate the new edge from all other corners of $P''$, hence we can repeat the operation at all corners of $P''$ to get the desired convex realization of $\calI'$.

\begin{figure}[ht] \centering
\includegraphics[page=4,trim=250 45 0 115,clip]{figures/only_edge_distances.pdf}
\caption{Closeup: insert an edge for a local maximum.
}%
\label{fig:onlyEdgeDistancesConvexOdd}
\end{figure}

\medskip
\paragraph{(\ref{it:Iprime_RayHitEdge})$\Rightarrow$(\ref{it:I_convex}).}
To go from a realization $P'$ of $\calI'$ back to one of $\calI$, we have to re-insert edges whose distances were neither local minima nor local maxima.

Say we have a local minimum $h_0'$ and a local maximum $h_1'$ and in the original instance had between them $\ell$ edge distances
$h_{0}'<h_1<h_2<\dots<h_\ell<h_1'$.
(Recall that all distances are distinct, which is crucial in this construction.)
So we must add a new edge $e_i$ (for $i=1,\dots,\ell$) between the two edges $e_0'$ and $e_1'$ that realized edge distance $h_0'$ and $h_1'$.
Consider Figure~\ref{fig:onlyEdgeDistancesConvexNotMinMax}.
Let $c_1'$ be the corner between  $e'_{0}$ and $e'_1$, and
$x_{0}$ and $x_1$ be the tangent points of these edges.
Imagine sliding a segment $e=\overline{ab}$
where initially $a=x_{0}$ and $b=c'_1$, at the end $a=c'_1$ and $b=x_1$, and each point moves at constant speed.    

\begin{figure}[ht] \centering
\includegraphics[page=5,scale=0.8,trim=0 5 0 110,clip]{figures/only_edge_distances.pdf}
\caption{Closeup: insert edges between a local minimum and a local maximum.}%
\label{fig:onlyEdgeDistancesConvexNotMinMax}
\end{figure}

When we begin the slide of $e$, 
the supporting line $\mathbf{l}$ of $e$ 
contains $e_0'$ and hence has distance $h'_0<h_1$ from the origin.
At the end of the slide the supporting line $\mathbf{l}$ has distance $h'_1>h_\ell$.
Since the movement is continuous,  there is a first time when $\mathbf{l}$ has distance $h_1$ from $o$.   
Cut off corner $c_1'$ with a halfplane through the current $\mathbf{l}$ to obtain an edge with distance $h_1$.   Repeat,
i.e., continue the slide of $e$ until $\mathbf{l}$ has distance $h_2$ from the origin, cut off part of the polygon with the current $\mathbf{l}$ to obtain a new edge of distance $h_2$, etc.
The last edge is added strictly before the slide ends by $h_\ell<h'_1$.   So in particular both $x_0$ and $x_1$ remain as points on $e'_0$ and $e'_1$.
Therefore, the edges that we added here do not interfere with any edges that we might add for some other minimum-maximum pair elsewhere.
Hence, 
repeating this approach at all other such pairs gives $(\ref{it:I_convex})$.

We note that this construction crucially 
needs distinct edge distances; it is not clear how one could insert two equidistant edges if flat angles are not allowed.
Also note that in this construction the new edges are not necessarily hit by edge-normals.   This in fact cannot be avoided:    There are instances that can be realized, but not when all edge rays must  hit their edges.  To see this, consider $\calI=(1,2,100,1.5,101)$. If we could realize this with edge rays hitting edges, then the total sum of partial ray angles has to equal $360^\circ$, since all of them contribute positively to the ray angles.
But due to \cref{claim:lower_bound_corner_vector} any realization would need four partial ray angles $\geq \arccos(2/100)>88^\circ$, and one partial ray angle $\geq \arccos(1/2)=60^\circ$.   So no realization with all edge rays hitting edges exists.   
On the other hand, there exists a convex realization for this (and actually for \emph{all} sets of $n=5$ edge distances), because the reduced instance has size at most 4 and Condition (\ref{it:angle_sum}) hence always holds.
\end{proof}

\subsection{Corner Distance}%
\label{sec:corner_distance-corner_distance}

For the scenario ``only corner distances'', 
we can test whether there exists a convex realization when all corner distances are distinct due to Theorem~\ref{thm:only_edge_distances_convex}  and using polarity.
There always exists a star-shaped realization, since we can simply place corners of the appropriate distance from the origin $o$, with ray angles $360^\circ/n$, and connect them in order.

\section{Settings Defined by Two Types of Information}%
\label{sec:two_features}

\subsection{Existing or Trivial Results}%
\label{sec:trivial2}

Quite a few of the scenarios where two types of features are given have either been studied before, or are trivial.   
\begin{enumerate}
\itemsep -2pt
    \item 
Reconstructing a polygon from edge lengths and edge normals
is the Minkowski Problem discussed in the introduction.%
\makeatletter\edef\@currentlabel{\arabic{section}.\arabic{subsection}.\arabic{enumi}}\makeatother%
\label{sec:edge_length-edge_normal}

\item%
\makeatletter\edef\@currentlabel{\arabic{section}.\arabic{subsection}.\arabic{enumi}}\makeatother%
\label{sec:corner_vector-corner_distance}
Reconstructing from corner vectors and corner distances is trivial, because the given information specifies the corners uniquely.

\item%
\makeatletter\edef\@currentlabel{\arabic{section}.\arabic{subsection}.\arabic{enumi}}\makeatother%
\label{sec:edge_normal-edge_distance}
Reconstructing from edge normals and edge distances is trivial, since these specify the lines supporting the edges, and each intersection of two consecutive of these lines defines the corner.

\item%
\makeatletter\edef\@currentlabel{\arabic{section}.\arabic{subsection}.\arabic{enumi}}\makeatother%
\label{sec:edge_length-corner_distance}

Reconstructing from edge lengths and corner distances is trivial, since all three side-lengths of triangle $T_i$ are given. So $T_i$ is unique for all $i$, and with this so is the realization up to rotation.
\end{enumerate}

\subsection{Edge Normal and Corner Vector}%
\label{sec:edge_normal-corner_vector}

Assume that we are given the sequences of edge normals $\mathbf{n}_0,\dots,\mathbf{n}_{n-1}$ and corner vectors $\mathbf{c}_0,\dots,\mathbf{c}_{n-1}$.
Recall that 
this determines the polygon angle $\theta_i$, which is $180^\circ$ minus the ccw angle
$\alpha_i$ between $\mathbf{n}_{i-1}$ and $\mathbf{n}_i$.

As such, the reconstruction problem in this scenario is very closely related to the so-called
Aleksandrov Problem, which asks to reconstruct a polygon $P$ from its corner vectors and polygon angles.
Necessary and sufficient conditions (depending on $\alpha_0,\dots,\alpha_{n-1}$ and $\mathbf{c}_0,\dots,\mathbf{c}_{n-1}$) are known for realizability,
see Huang~\cite{huangMinkowskiProblemsGeometric2025} and the references therein.
However, Alexandrov's problem is not exactly the same as ours, because the polygon angles determine the edge normals only up to rotation.
So while the conditions for Alexandrov's problem are certainly necessary, it is not clear whether they are also sufficient, i.e., we
can reconstruct a polygon $P$ that additionally has the corners exactly on the corner rays.   (We cannot simply rotate $P$ to achieve
this since this might not match the edge normals.)
We therefore develop our own easy necessary and sufficient conditions for realizability, and they can be tested in the WordRAM model.

As a first step, we test that all partial ray angles are well-defined.
Recall that for $i\in \{0,\ldots,n-1\}$ the backward partial ray angle $\rho_i^-$ is the angle between
$\mathbf{c}_i$ and $\mathbf{n}_i$ in the range $[0,90^\circ)$; if the edge normal and corner vector
do not form such an angle then clearly no realization can exist.
(We can test this by computing $\cos \rho_i^-=\langle \mathbf{c}_i,\mathbf{n}_i \rangle$, which must be positive.)
Similarly we require that
$\mathbf{n}_i$ and $\mathbf{c}_{i+1}$ form an angle in the range $[0,90^\circ)$, else the
forward partial ray angle $\rho_i^+$ is not well-defined and no realization exists.

\paragraph{Propagation formula:}     Let us assume that there is a convex realization, say with
corner-distance $d_0$.    The backward partial ray angle satisfies $\cos \rho_0^-=
\langle \mathbf{n}_0,\mathbf{c}_{0} \rangle= h_0/d_0$,
so we know $h_0= d_0 \langle \mathbf{n}_0,\mathbf{c}_{0} \rangle$.
A similar argument with the forward partial ray angle $\rho_0^+$ gives
$d_1= h_0/ \langle \mathbf{n}_0,\mathbf{c}_{1} \rangle = d_0
\langle \mathbf{n}_0,\mathbf{c}_{0} \rangle  / \langle \mathbf{n}_0,\mathbf{c}_{1} \rangle$.
Iterating, we get
$$d_{i+1} = d_0 \prod_{j=0}^i \frac{\langle \mathbf{n}_j,\mathbf{c}_{j} \rangle }{ \langle \mathbf{n}_j,\mathbf{c}_{j+1} \rangle}.$$
\noindent{}Of course $d_{n}$ must equal $d_0$.   So no polygon can exist unless
$$\prod_{i=0}^{n-1} \frac{\langle \mathbf{n}_i,\mathbf{c}_{i} \rangle }{ \langle \mathbf{n}_i,\mathbf{c}_{i+1} \rangle}=1$$
and all the inner products are strictly positive.

Now, we consider the evaluation of the propagation formula (i.e., the computation of the products of the inner products) on the WordRAM.
The multiplication of two $m$-word integers on a WordRAM can be done in linear time with respect to the number of bits of these two numbers~\cite{furer2014}.
Hence, when our word size is $O(\log n)$, we can multiply two $m$-word integers in $f(m) = O(m \log n)$ operations.

To evaluate the propagation formula, we have to multiply $n$ numbers, which may result in numbers that require more computer words than the initial numbers.
Assume that each of the initial $n$ numbers consists of $c$ words in the WordRAM.
Additionally, assume that the numbers (which are not necessarily integers) are encoded in such a way that the multiplication of two numbers requires a constant number of operations on $c$-word integers, including multiplications.

To reduce the number of multiplications with large numbers, we use divide and conquer.
For this, consider a complete binary tree with the initial $n$ numbers at the leaves.
This tree is evaluated bottom-up, where each node is assigned the product of its two children.
With this strategy, all pairwise multiplications are among numbers of roughly equal size, requiring
\begin{equation*}
    T(n) = \sum_{i = 0}^{\log n - 1} \left(\frac{n}{2^{i+1}}\right)  f(c2^i) \in O(n \log^2 n)
\end{equation*}
time altogether.

If equality holds, then fix an arbitrary value of $d_0$.   Using
the propagation formula, this gives us all corner distances.    If we construct
a polygon with these corner distances (and the given corner vectors), then this respects the edge normal due to our
formula, and so is a star-shaped realization.   (It is a convex realization if and only if the polygon angles
are less than $180^\circ$, which we can actually test beforehand by checking whether inner products of consecutive edge normals are positive.)
The only choice we had was $d_0$, but since all other corner distances depend
linearly on $d_0$, a different choice of $d_0$ simply scales the polygon, which is hence unique up to scaling.
We obtain the following result:

\begin{restatable}{theorem}{AlexandrovModified}%
\label{thm:AlexandrovModified}
For any input
sequences of edge normals and corner vectors,
we can test in $O(n \log^2 n)$ time in the \mbox{WordRAM} model whether a realization exists.   If so, then it is unique up to scaling.
\end{restatable}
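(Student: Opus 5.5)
The plan is to turn the propagation argument given just before the statement into an exact characterization, and then to bound the cost of checking it on the WordRAM. The algorithm has three phases. First, check that all $2n$ inner products $\langle \mathbf{n}_i,\mathbf{c}_i\rangle$ and $\langle \mathbf{n}_i,\mathbf{c}_{i+1}\rangle$ are strictly positive. Second, check the cyclic product condition
$$\prod_{i=0}^{n-1} \langle \mathbf{n}_i,\mathbf{c}_{i} \rangle = \prod_{i=0}^{n-1} \langle \mathbf{n}_i,\mathbf{c}_{i+1} \rangle .$$
Writing it this way, as an equality of two products, avoids divisions. Third, if convexity is demanded, check that $\langle\mathbf{n}_{i-1},\mathbf{n}_i\rangle>0$-type orientation tests give polygon angles below $180^\circ$. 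Each of the first and third phases is $O(n)$ word operations.

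For the second phase I would use the balanced binary product tree already described. With Fürer-type multiplication $f(m)=O(m\log n)$, level $i$ costs $\tfrac{n}{2^{i+1}}f(c2^i)=O(n\log n)$. Over $\log n$ levels this gives $O(n\log^2 n)$, and the final comparison of two $O(n)$-word integers takes $O(n)$ time. If the input numbers are rationals, I would clear denominators into the two products first, so the test stays an integer equality.

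\textbf{Necessity.} Suppose $P$ is a realization, with the origin strictly inside. Consider its tangent point $x_i$ and the right triangles $T_i^-$ and $T_i^+$. These give $h_i=d_i\langle\mathbf{n}_i,\mathbf{c}_i\rangle=d_{i+1}\langle\mathbf{n}_i,\mathbf{c}_{i+1}\rangle$ with $h_i,d_i>0$. This forces positivity of the inner products, and telescoping around the cycle forces the product condition.

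\textbf{Sufficiency.} Fix $d_0=1$ and define $d_{i+1}$ by the propagation formula. The product condition makes $d_n=d_0$, so the sequence closes up. Put $c_i=d_i\mathbf{c}_i$.

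The main obstacle is to verify that this really is a star-shaped polygon whose edge $\overline{c_ic_{i+1}}$ has normal $\mathbf{n}_i$. Both points satisfy $\langle \mathbf{n}_i,p\rangle=h_i$ with $h_i>0$, so the segment lies on the line with normal $\mathbf{n}_i$, on the far side from $o$. The corner vectors are in ccw order and not contained in a half-plane, so the ray angles sum to $360^\circ$. Each ray angle is also less than $180^\circ$: the two corners lie in the open half-plane $\langle\mathbf{n}_i,p\rangle>0$. Hence the triangles $T_i$ tile a neighborhood of $o$ without overlap, and the polygon is simple and star-shaped with $o$ strictly inside. Its edge normal is $\mathbf{n}_i$ because the positive side of the supporting line contains the whole edge.

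For a convex realization, the polygon angle at $c_i$ is determined by $\mathbf{n}_{i-1}$ and $\mathbf{n}_i$. So convexity, and the absence of flat angles, is exactly the third-phase test.

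\textbf{Uniqueness.} In any realization all $d_i$ are forced as multiples of $d_0$, so any two realizations differ by a scaling about the origin.
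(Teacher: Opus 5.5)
Your proposal is correct and follows the paper's argument essentially step for step: positivity of the $2n$ inner products, the cyclic product condition evaluated with the balanced product tree in $O(n\log^2 n)$ time, propagation from an arbitrary $d_0$, and uniqueness because every $d_i$ is a fixed multiple of $d_0$. Your sufficiency check, that the triangles $T_i$ tile a neighbourhood of $o$, is more careful than the paper's one-line claim.

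One small correction, which the paper's wording also needs. The convexity test must be the orientation test $\langle \mathbf{n}_{i-1}^{\circlearrowleft},\mathbf{n}_i\rangle>0$, not $\langle\mathbf{n}_{i-1},\mathbf{n}_i\rangle>0$. The latter only tests whether $\theta_i\in(90^\circ,270^\circ)$, so it rejects every triangle and accepts some reflex angles.
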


\subsection{Area and Corner Distance}\label{sec:area-corner_distance}

\begin{theorem}
For input sequences of areas $A_0,\dots,A_{n-1}$ and corner distances $d_0,\dots,d_{n-1}$, we can test in the RealRAM model in $O(n^2\log n)$ time whether it has a star-shaped realization.   In $O(n^3)$ time we can find all such realizations, and test whether any of them is convex.
\end{theorem}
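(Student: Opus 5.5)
Every realization is determined, up to rotation, by its ray angles $\rho_0,\dots,\rho_{n-1}$, and the SAS-formula pins down each $\rho_i$ up to a binary choice. The plan is therefore to reduce realizability to a small subset-sum problem over these binary choices.

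\textbf{Step 1 (local choices).} By the SAS-formula, $\sin\rho_i = 2A_i/(d_i d_{i+1})$. If this value exceeds $1$ for some $i$, no realization exists. Otherwise let $a_i\in(0,90^\circ]$ be $\arcsin(2A_i/(d_id_{i+1}))$, computed in the RealRAM model. Then $\rho_i\in\{a_i,\,180^\circ-a_i\}$. Conversely, placing corner $c_i$ at distance $d_i$ on a ray such that consecutive rays form ccw angles $\rho_i\in(0,180^\circ)$ gives a star-shaped polygon if and only if $\sum_i \rho_i = 360^\circ$. Indeed, the triangles $T_i$ then tile a neighbourhood of $o$ without overlap, and the area of $T_i$ is correct by SAS. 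So star-shaped realizations correspond exactly (up to rotation) to sets $S$ of ``obtuse'' indices with
\begin{equation*}
\sum_{i} a_i + \sum_{i\in S} b_i = 360^\circ, \qquad b_i := 180^\circ - 2a_i \geq 0 .
\end{equation*}
When $a_i=90^\circ$ both choices coincide, so we take $i\notin S$ to avoid double counting.

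\textbf{Step 2 (few obtuse angles).} Every index in $S$ contributes $\rho_i>90^\circ$, and the $\rho_i$ sum to $360^\circ$. Hence $|S|\le 3$, so there are only $O(n^3)$ candidate sets $S$. Enumerating them and checking the linear equation in $O(1)$ each, after precomputing $\sum_i a_i$, lists all star-shaped realizations in $O(n^3)$ time. For the pure existence test we do better, with target $t=360^\circ-\sum a_i$:
\begin{itemize}
\item $|S|=0$ means checking $t=0$;
\item $|S|=1$ is a linear scan;
\item $|S|=2$ is a sort of the $b_i$ plus binary search;
\item $|S|=3$ is a $3$SUM instance on reals, solved by fixing $i$ and binary-searching the sorted list for $t-b_i-b_j$, for $O(n^2\log n)$ total.
\end{itemize}
Care is needed to keep indices distinct, which is easy by using sorted positions. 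Exact comparison of reals is permitted in the RealRAM model.

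\textbf{Step 3 (convexity within $O(n^3)$).} Checking each realization in $O(n)$ time would cost $O(n^4)$, which is too much. This is the step I expect to require the most care. The inner angle at $c_i$ depends only on $d_{i-1},d_i,d_{i+1}$ and on $\rho_{i-1},\rho_i$, so only on the choices at indices $i-1$ and $i$. For each corner, precompute in $O(1)$ whether it is convex under each of the four combinations of choices.
\begin{itemize}
\item Let $B$ be the number of non-convex corners in the all-acute configuration.
\item For a candidate $S$ with $|S|\le 3$, only the at most $6$ corners adjacent to indices of $S$ change status.
\item Convexity of $S$ is then decided in $O(1)$ by starting from $B$ and correcting for these affected corners.
\end{itemize}
Thus all realizations can be enumerated and tested for convexity in $O(n^3)$ total time.

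The main subtlety is the correctness argument in Step 1: every choice of $\rho_i\in(0,180^\circ)$ summing to $360^\circ$ really yields a valid star-shaped polygon, and every realization arises this way. This follows because the rays from $o$ to the corners are in ccw order and each triangle $T_i$ lies within its wedge.
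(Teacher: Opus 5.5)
Your proposal is correct and follows essentially the same route as the paper. Both arguments use the binary acute/obtuse choice for each $\rho_i$ from $\sin\rho_i=2A_i/(d_id_{i+1})$, the bound of at most three obtuse ray angles, a constant-time angle-sum check per candidate set (with a sorted array and binary search over pairs of fixed indices for the $O(n^2\log n)$ existence test; in your $|S|=3$ case you mean fixing the pair $i,j$, not just $i$), and a constant-time convexity check per candidate obtained by recomputing only the at most six affected polygon angles against precomputed flags for the all-acute configuration.
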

\begin{proof}
Recall that
$\sin\rho_i = 2A_i/d_i d_{i+1}$ for all $i$.
So if $2A_i/d_i d_{i+1}>1$, then the input has no realization. If $2A_i/d_i d_{i+1}=1$, the ray angle $\rho_i$ is $90^\circ$. Otherwise, this gives two possible values for $\rho_i$, one of which is the solution $\alpha_i$ to $\arcsin(2A_i/d_i d_{i+1})$ in $[0,90^\circ)$, and the other is 
$180^\circ-\alpha_i> 90^\circ$.

Assume that we were told the indices $I$ where $\rho_i$ exceeds $90^\circ$;
we have $|I|\leq 3$ since the sum of the ray angles is $360^\circ$. 
We first show that (after some pre-computation) we can then test in constant time whether this choice of $I$ gives a realization, and if so, whether it is convex.   Obviously the ray angles must sum to $360^\circ$, so we need
$\sum_{i\not\in I} \alpha_i +\sum_{i\in I} (180^\circ-\alpha_i) = 360^\circ$.  
If we pre-compute
$\sigma:=\sum_{i} \alpha_i$, then this means testing whether $360^\circ-\sigma$ equals $|I|\cdot 180^\circ - 2\sum_{i\in I} \alpha_i$, which by $|I|\leq 3$ takes constant time.   If this is satisfied, then there is a unique (up to rotation) star-shaped polygon $P_I$ that has these ray angles and corner distances.    To test whether it is convex, observe that the polygon angle at corner $c_i$ is determined by the two ray angles $\rho_{i-1}$ and $\rho_{i}$ since we know all corner distances.   We assume that we have pre-computed the value of $\theta_i$ when these ray angles are $\alpha_{i-1}$ and $\alpha_i$, and flagged all polygon angles above $180^\circ$.   To test whether $P_I$ is convex, we hence only have to re-compute the (up to six) polygon angles adjacent to a ray angle with index in $I$.    For $P_I$ to be convex, these must be less than $180^\circ$, and replace all flagged angles; we can test this in constant time.

So to test whether there is a realization, we only have to find one set $I$ that satisfies the conditions.   With a brute-force approach this can be done in $O(n^3)$ time by checking all possible index sets of size at most 3.  The run-time for star-shaped polygons can be reduced to $O(n^2 \log n)$ as follows.
Store $\alpha_0,\dots,\alpha_{n-1}$ in an array $A$ sorted by value (not index).
Iterate over all subsets $I'$ of at most two indices, and for each of them, find all indices $z$ such that $I'\cup \{z\}$ satisfies the ray angle condition, i.e., 
$2\alpha_z= -360^\circ + \sigma + \left(\left|I'\right|+1\right)\cdot 180^\circ - \sum_{i\in I'} 2\alpha_i$.

The right hand side can be computed in constant time for fixed $I'$.   Therefore, to find a suitable
index $z$ (if any), we do a binary search in $A$, return the range of indices $z$ where $\alpha_z$ has the correct value, and see whether it contains an index not in $I'$; this takes $O(\log n)$ time.

\end{proof}

\subsection{Area and Corner Vector}%
\label{sec:area-corner_vector}

Let us introduce some notation. Given a vector $\mathbf{v}$, we use $\mathbf{v}^{\circlearrowleft}$ to denote the vector obtained by rotating $\mathbf{v}$ by 90$^\circ$ counter-clockwise.

We now study the scenario where we are given the areas $A_0,\dots,A_{n-1}$ as well as the corner vectors $\mathbf{c}_0,\dots,\mathbf{c}_{n-1}$.   Recall  that the ray angle $\rho_i$
is related to the corner vectors; in particular
$\sin\rho_i = \langle \mathbf{c}_{i},\mathbf{c}^{\circlearrowleft}_{i+1} \rangle$ can be computed in the WordRAM model.

\paragraph{Propagation formula:}  The idea to construct a polygon is quite similar to the propagation that we used
in Section~\ref{sec:edge_normal-corner_vector}.
Let us assume that there is a realization, say with
corner-distance $d_0=x$ (with $x$ a variable). 
By the SAS-formula, 
the next corner-dis\-tance is $d_{1}=\tfrac{2A_0}{d_0\sin \rho_0} =\tfrac{2A_0}{\sin\rho_0}\cdot \tfrac{1}{x}$.
Iterating, we get 

\newcommand{\formula}{\noindent\begin{minipage}{\linewidth}
\begin{small}
\begin{align*} 
d_{2i+1} &{=} \frac
{\sin \rho_1 \sin \rho_3 \cdots \sin \rho_{2i-1}} 
{2A_1 \cdot 2A_3 \cdots 2A_{2i-1}}
\frac
{2A_0 \cdot 2A_2 \cdots 2A_{2i}} 
{\sin \rho_0 \sin \rho_2 \cdots \sin \rho_{2i}}
\cdot \frac{1}{x}%
\\[1.1ex]
d_{2i+2} & {=}
\frac{2A_1 \cdot 2A_3 \cdots 2A_{2i+1}}{\sin \rho_1 \sin \rho_3 \cdots \sin \rho_{2i+1}} 
\frac
{\sin \rho_0 \sin \rho_2 \cdots \sin \rho_{2i}} 
{2A_0 \cdot 2A_2 \cdots 2A_{2i}} \cdot x %
\end{align*}%
\end{small}%
\end{minipage}
}%
\formula

\medskip\noindent{}and of course $d_{n}$ must equal $d_0$.   

Consider odd $n$.    From the propagation formula we know that in any realization $d_{n}=s \tfrac{1}{x}$ for some constant $s>0$ that depends only on the input. Since we require $d_{n}=d_0=x$, the only possible realization has $x=\sqrt{s}$,
and for this choice there always exists a solution.
To actually find it, we need to use square-roots, hence the RealRAM model.
If $n$ is even, then we have $d_n=s'x$ for some constant $s'$.  The restriction $d_0=d_{n}=s'd_0$ can be satisfied only if $s'=1$.
If $s'=1$, then for any choice of initial $x$ we get a star-shaped realization.

Using the same approach for multiplication as in \cref{sec:edge_normal-corner_vector} leads to the following
result:

\begin{restatable}{theorem}{AreaCornerVectorsStarshaped}%
\label{thm:AreaCornerVectorsStarshaped}
For any input sequences of corner vectors 
and triangle areas, we can test in $O(n \log^2 n)$ time in the WordRAM model whether a star-shaped realization exists.   
\end{restatable}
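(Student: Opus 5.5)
The plan follows the propagation formula above and mirrors the argument for Theorem~\ref{thm:AlexandrovModified}.

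\textbf{Step 1: ray angles.} For each $i$ we compute $\sin\rho_i=\langle \mathbf{c}_i,\mathbf{c}^{\circlearrowleft}_{i+1}\rangle$ with constant-size arithmetic. If some value is $\le 0$, then $\rho_i\ge 180^\circ$. In that case no triangle $T_i$ of positive area with the origin on the correct side exists, so we reject. Since the corner vectors are given in ccw order around the origin and not all in one half-plane, the ray angles are then determined. They sum to $360^\circ$, so placing corners on the rays at any positive distances yields a star-shaped polygon. Consequently, star-shaped realizations correspond exactly to positive sequences $d_0,\dots,d_{n-1}$ with $d_id_{i+1}\sin\rho_i=2A_i$ for all $i$, indices modulo $n$. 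This holds because a corner $c_i=d_i\mathbf{c}_i$ with all $d_i>0$ and consecutive ray angles in $(0,180^\circ)$ gives a simple polygon, star-shaped from $o$.

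\textbf{Step 2: closing condition.} Write $q_i:=2A_i/\sin\rho_i>0$, so $d_{i+1}=q_i/d_i$. By the propagation formula, after $n$ steps $d_n=s/x$ if $n$ is odd, and $d_n=s'x$ if $n$ is even. Here $s'=\prod_{i\text{ odd}}q_i/\prod_{i\text{ even}}q_i$, an alternating product.

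For odd $n$, a realization always exists: take $x=\sqrt{s}$, and all $d_i$ are positive. The algorithm therefore only has to check the sign conditions of Step 1; computing the polygon itself needs RealRAM, but testing does not.

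For even $n$, a realization exists if and only if $s'=1$, that is,
\[\prod_{i\text{ even}} 2A_i\sin\rho_{i}' \;=\;\prod_{i\text{ odd}} 2A_i\sin\rho_i',\]
where $\sin\rho_i'$ denotes the cross-multiplied sines. Concretely, clearing denominators, we test
\[\prod_{i\text{ odd}}A_i\cdot\prod_{i\text{ even}}\sin\rho_i=\prod_{i\text{ even}}A_i\cdot\prod_{i\text{ odd}}\sin\rho_i.\]
This avoids division, and every factor is a WordRAM-computable number, since $\sin\rho_i$ is an inner product of input vectors. If the test succeeds, every $x>0$ gives a realization.

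\textbf{Step 3: running time.} Each side of the test is a product of $O(n)$ numbers of constant word size. We evaluate each side with the balanced binary-tree multiplication from Section~\ref{sec:edge_normal-corner_vector}, which takes $O(n\log^2 n)$ time, and then compare the two resulting big integers in linear time. If the input is given as rationals rather than integers, we first bring numerators and denominators to the two sides, as in that section.

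\textbf{Main obstacle.} The main subtlety is justifying the equivalence in Step 1: star-shapedness requires nothing beyond positive $d_i$ together with ray angles in $(0,180^\circ)$ that sum to $360^\circ$. Once that is established, the rest is the exact-arithmetic bookkeeping of Step 3.
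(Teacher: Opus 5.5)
Your proposal is correct and follows the paper's proof essentially step for step: propagate $d_{i+1}=2A_i/(d_i\sin\rho_i)$, note that odd $n$ always closes with $x=\sqrt{s}$ while even $n$ requires $s'=1$ (your cleared-denominator product identity), and evaluate the products with the balanced-tree multiplication of Section~\ref{sec:edge_normal-corner_vector} for $O(n\log^2 n)$ time. One caveat applies equally to the paper's own proof: the preliminaries forbid flat polygon angles, and for odd $n$ the forced candidate can have one (take a quadrilateral around $o$ with an extra corner placed on an edge), so your Step~1 ``exact correspondence'' strictly needs that degenerate case excluded.
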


We next consider the variant where we want the realization to be convex.

\begin{restatable}{theorem}{AreaCornerVectorsConvex}
For any input sequences of corner vectors 
and triangle areas, we can test in $O(n^2 \log n)$ time in the WordRAM model whether a convex realization exists.   
\end{restatable}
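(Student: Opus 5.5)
The plan is to reduce convexity to finitely many conditions on the single parameter $t:=x^2$, where $x=d_0$ as in the propagation formula. By Theorem~\ref{thm:AreaCornerVectorsStarshaped}, star-shaped realizations exist only in two situations, and the propagation formula describes all of them. For odd $n$ there is exactly one candidate, with $t=s$. For even $n$ with $s'=1$ there is a one-parameter family with $t>0$ free. In both cases the formula gives $d_{2i}=a_{2i}x$ and $d_{2i+1}=b_{2i+1}/x$, where the constants $a_{2i},b_{2i+1}>0$ are products of input-derived numbers: the areas $A_j$ and the values $\sin\rho_j=\langle \mathbf{c}_j,\mathbf{c}^{\circlearrowleft}_{j+1}\rangle$.

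The key step is to write convexity at corner $c_i$ as a condition on $t$. The corners are $c_j=d_j\mathbf{c}_j$ with the given rays in ccw order, so the polygon is convex iff for every $i$ the signed area of the triangle $c_{i-1}c_ic_{i+1}$ is positive. This signed area equals $A_{i-1}+A_i-\tfrac12 d_{i-1}d_{i+1}\langle \mathbf{c}_{i-1},\mathbf{c}^{\circlearrowleft}_{i+1}\rangle$. The first two terms are given. The product $d_{i-1}d_{i+1}$ is a constant times $1/t$ if $i$ is even, and a constant times $t$ if $i$ is odd. So the condition at $c_i$ depends on the sign of $\sigma_i:=\langle \mathbf{c}_{i-1},\mathbf{c}^{\circlearrowleft}_{i+1}\rangle$, which is computable in WordRAM:
\begin{itemize}
\item if $\sigma_i\le 0$, the condition holds for every $t$;
\item otherwise it becomes a one-sided bound, namely $t>\tau_i$ for even $i$ and $t<\tau_i$ for odd $i$.
\end{itemize}
Here $\tau_i$ is an explicit rational expression in the input quantities. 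Hence the set of $t$ giving a convex realization is an open interval $(\max_{i \text{ even}}\tau_i,\ \min_{i\text{ odd}}\tau_i)$, intersected with $(0,\infty)$. For even $n$ (after the test $s'=1$) we test whether this interval is nonempty. For odd $n$ we test whether $s$ lies in it. This sidesteps square roots, because we never need $x$ itself, only $t=x^2$.

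What remains is to run this in the WordRAM within $O(n^2\log n)$, and I expect the arithmetic bookkeeping to be the main obstacle. Each $\tau_i$ is a ratio of products of up to $n$ input numbers, so it can occupy $\Theta(n)$ words; we cannot afford the divide-and-conquer product of Section~\ref{sec:edge_normal-corner_vector} separately for each $i$. My plan is to compute the coefficients $a_{2i},b_{2i+1}$ incrementally as prefix products, kept as numerator/denominator pairs. Each step multiplies an $O(n)$-word number by an $O(1)$-word number, costing $O(n\log n)$, so all $n$ steps cost $O(n^2\log n)$. Each $\tau_i$ is then formed with $O(1)$ further multiplications of $O(n)$-word numbers, again $O(n\log n)$ each by~\cite{furer2014}. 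We find the maximum over even $i$ and the minimum over odd $i$ by $O(n)$ cross-multiplied comparisons of such fractions, each $O(n\log n)$. The final comparisons with each other, with $0$, and (for odd $n$) with $s$ cost the same. This gives $O(n^2\log n)$ overall.

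I would also double-check one detail: the sign convention in the signed-area identity. It must yield a strict inequality, since flat angles are excluded, and it must remain valid when some ray angle exceeds $90^\circ$, which is exactly where $\sigma_i\le 0$ can occur.
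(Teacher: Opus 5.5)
Your proposal is correct and is essentially the paper's proof: the paper also turns convexity at each corner $c_i$ into comparing $\tfrac12 d_{i-1}d_{i+1}\sin(\rho_{i-1}+\rho_i)$ with $A_{i-1}+A_i$, which is equivalent to your signed-area condition, giving an upper bound on $x^2$ for odd $i$ and a lower bound for even $i$, and then tests feasibility together with $x^2=s$ when $n$ is odd. The differences are cosmetic: the paper gets the coefficients $s_i$ from the stored divide-and-conquer tree, whereas your prefix products are simpler and give the same $O(n^2\log n)$ bound; and on the sign you wanted to check, with $\mathbf{v}^{\circlearrowleft}$ the ccw rotation one has $\langle \mathbf{c}_{i-1},\mathbf{c}^{\circlearrowleft}_{i+1}\rangle=-\sin(\rho_{i-1}+\rho_i)$, so take $\sigma_i=\langle \mathbf{c}^{\circlearrowleft}_{i-1},\mathbf{c}_{i+1}\rangle$ (a slip inherited from the paper's notation that changes nothing else).
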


\begin{proof}
Assuming star-shaped realizations exist, write $d_i(x),\theta_i(x)$ etc.\@ for the polygon features obtained if we start propagating with $d_0=x$.
We must be careful in choosing $x$ such that $\theta_i(x)< 180^\circ$ for all $i$.
Note that $\theta_i(x)$ will automatically satisfy this if $\rho_{i-1}+\rho_i\geq 180^\circ$, so we must consider only such indices $i$ where  $\rho_{i-1}+\rho_i<180^\circ$.
Consider the
triangle $T$ spanned by the corners $c_{i-1}(x),c_{i+1}(x)$ and the origin $o$,
see Figure~\ref{fig:area-corner_vector-convex}.
Note that the angle at $c_i(x)$ is smaller/greater than $180^\circ$ if and only if the area of $T$ is  smaller/greater than $A_{i-1}+A_i$.
By the SAS-formula the area of $T$ is
$$ \frac{1}{2} d_{i-1}(x)d_{i+1}(x) \sin\big(\rho_{i-1}{+}\rho_i\big).$$

Note that we can compute $\sin(\rho_{i-1}{+}\rho_i)$ from the inner product of $\mathbf{c}_{i-1}$ and $\mathbf{c}_{i+1}^\circlearrowleft$ in the WordRAM model.

\begin{figure}[ht]
\hspace*{\fill}
\includegraphics[page=1]{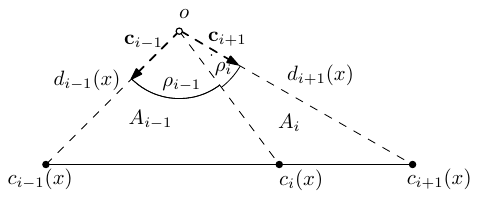}
\hspace*{\fill}
\caption{Configuration with flat angle $\theta_i(x)$.}%
\label{fig:area-corner_vector-convex}
\end{figure}

If $i$ is odd, then from the propagation formula for $\ell=\pm 1$ we have $d_{i+\ell}(x)=s_{i+\ell}x$ 
for some value $s_{i+\ell}$ that only depends on the input and can be computed in the WordRAM model.
Hence, the flat angle happens if
\begin{equation*}    
    x^2=\frac{2(A_{i-1}{+}A_i)}{s_{i+1}s_{i-1}\sin\big(\rho_{i-1}{+}\rho_i\big)}
\end{equation*}
and to have a convex angle $x^2$ needs to be smaller than this.  
If $i$ is even, then similarly $d_{i+\ell}(x)=s_{i+\ell}/x$ for $\ell\in \{\pm 1\}$, the flat angle happens if
\begin{equation*}    
    x^2=\frac{s_{i+1}s_{i-1}\sin\big(\rho_{i-1}{+}\rho_i\big)}{2(A_{i-1}+A_i)}
\end{equation*}
and
we want $x^2$ to be bigger than this.
Over all $i$, this gives us strict upper and lower bounds on $x^2$, and additionally we need $x^2=s$ if $n$ is odd.   
There exists a convex realization if and only if these leave a feasible value for $x^2$.

To calculate the values on a WordRAM, we follow a similar approach as we described for \cref{thm:AreaCornerVectorsStarshaped}.
First, we calculate $s$ as in \cref{thm:AreaCornerVectorsStarshaped} using the divide and conquer approach in $O(n \log^2 n)$ time, with two minor alterations:
here, the factors must be multiplied in order regarding the $s_i$, and we keep the intermediate results.
Notice that the intermediate results relate to each other as a binary tree.
Furthermore, observe that each $s_i$ is the product of $O(\log n)$ intermediate results of word sizes of at most $\frac{n}{2}$, $\frac{n}{4}$, $\frac{n}{8}$, $\ldots$ along a path in that tree.
When multiplying these in ascending order, each $s_i$ can be calculated in $O(n \log n)$ time, with $O(n^2 \log n)$ time altogether.
Afterwards, calculating a product $s_{i+1} s_{i-1}$ requires $O(n \log n)$ time as well, for $O(n^2 \log n)$ total time for all pairs.
This allows calculating all bounds on $x^2$ in $O(n^2 \log n)$ time and finally finding the largest lower bound and the smallest upper bound in $O(n^2)$ time.
\end{proof}

Our two previous algorithms worked under the (standard) assumption that the given areas are labelled in the sense that we know which area is meant for which edge.   We now show that in the unlabelled setting, the problem becomes NP-hard even for a very restricted set of corner vectors.

\begin{restatable}{thm}{CornerVectorsAreaNphard}
    Given numbers $A_0,\dots,A_{n-1}>0$, it is NP-hard to decide whether there exists a star-shaped polygon where all ray angles are equal, and the triangle areas are $A_0,\dots,A_{n-1}$ (not necessarily in this order).
\end{restatable}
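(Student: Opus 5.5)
Fixing all ray angles to $\rho:=360^\circ/n$ reduces the problem to a purely combinatorial one, and I would then show that this combinatorial problem is NP-hard by a reduction from \textsc{Product Partition}.

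\textbf{Step 1: reformulation.} With $\rho_i=\rho$ for all $i$, the SAS-formula gives $A_i=\tfrac12\sin\rho\, d_id_{i+1}$. Put $\lambda:=\sqrt{\sin(\rho)/2}$ and $\delta_i:=\lambda d_i$. Then the question becomes: can the given numbers be arranged cyclically as $A_{\pi(0)},\dots,A_{\pi(n-1)}$ so that positive reals $\delta_0,\dots,\delta_{n-1}$ exist with $\delta_i\delta_{i+1}=A_{\pi(i)}$ for all $i$?

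Conversely, any positive $d_i$ with equal ray angles $\rho<180^\circ$ (for $n\ge 3$) yields a star-shaped polygon, by placing $c_i$ at distance $d_i$ on the ray of angle $i\rho$. This is the same argument as for ``only corner distances'' in Section~\ref{sec:corner_distance-corner_distance}.

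\textbf{Step 2: characterization for even $n$.} I will only produce even $n$. By the propagation formula of Section~\ref{sec:area-corner_vector}, a fixed order closes up (for any start value $x$) if and only if $s'=1$. Since $\sin\rho$ is common to all factors, this is equivalent to
\[
\prod_{i \text{ even}} A_{\pi(i)}=\prod_{i \text{ odd}} A_{\pi(i)}.
\]
Hence a realization exists if and only if the multiset $\{A_0,\dots,A_{n-1}\}$ can be split into two parts of exactly $n/2$ elements each with equal products. Any such split yields a realization: place one part on the even positions and the other on the odd positions in arbitrary order, choose $x$ arbitrarily, and propagate.

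\textbf{Step 3: reduction.} I would reduce from \textsc{Product Partition}, which asks whether positive integers $a_1,\dots,a_m$ can be split into two sets with equal products. This problem is strongly NP-hard (Ng, Barketau, Cheng, Kovalyov), so the numbers stay polynomially bounded and fit the computer model.

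To remove the cardinality constraint, I append $m$ copies of $1$ (and, if needed, a pair $2,2$ to ensure $n\ge 4$), so $n=2m$ or $n=2m+2$. An equal-product split of the $a_j$ can then be balanced to $n/2$ elements on each side using the ones. Conversely, dropping the padding from an equal-cardinality equal-product split leaves an equal-product split of the $a_j$; in the case with the extra pair, the two 2's cancel, whether they lie on the same side or on opposite sides.

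The main point to check is that the irrational factor $\sin(360^\circ/n)$ does no harm. It does not, because it cancels in the balance condition, so the given $A_i$ can simply be taken to be the integers themselves. The other point that needs care is citing strong NP-hardness of \textsc{Product Partition}, since the weakly NP-hard additive version would not transfer via exponentiation without blowing up the encoding size.
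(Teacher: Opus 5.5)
Your reduction is essentially the paper's. Both reduce from Product Partition and pad with unit areas so that the $n/2$-per-parity-class constraint can always be met. Both then use the even-$n$ propagation formula of Section~\ref{sec:area-corner_vector}: with equal ray angles all $\sin\rho_i$ cancel, so an ordering closes up if and only if the areas on even and on odd positions have equal products. Your explicit converse construction and the remark on strong NP-hardness are reasonable additions that the paper leaves implicit.

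The one step that is actually wrong is the extra pair $2,2$ for the degenerate case $m=1$. If both $2$'s land on the same side, they multiply that side by $4$ rather than cancel. For instance, $a_1=4$ is a no-instance of Product Partition. Yet $\{4,1,2,2\}$ splits as $\{4,1\}$ versus $\{2,2\}$ with equal cardinalities and equal products, so your reduction would output a yes-instance. This is easy to repair in either of two ways:
\begin{itemize}
\item decide $m\le 1$ directly (for $m=1$ the answer is yes iff $a_1=1$) and map to a fixed yes- or no-instance;
\item append $m+2$ ones instead of $m$, which gives $n=2m+2\ge 4$; given a split of sizes $k$ and $m-k$, add $m+1-k$ and $k+1$ ones respectively.
\end{itemize}
The paper simply tacitly assumes $N\ge 2$.

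Finally, since flat polygon angles are disallowed, ``choose $x$ arbitrarily'' should read ``choose $x$ avoiding the at most $n$ values of $x^2$ at which some $c_{i-1},c_i,c_{i+1}$ become collinear'' (compare the convexity argument in Section~\ref{sec:area-corner_vector}). The paper glosses over this as well.
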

\begin{proof}
We give a reduction from the problem \emph{product partition}, where we are given $N$ numbers $a_0,\dots,a_{N-1}$, and we would like to find a subset $I\subseteq \{0,\dots,N{-}1\}$ with
$\prod_{i\in I} a_i = \prod_{i\not\in I} a_i$.
This is known to be NP-hard~\cite{ng_product_2010}.
Define $n=2N$,  
and set $A_i=a_i$ for $i<N$ and $A_i=1$ for $i=N,\dots,n{-}1$.  

Assume that we had a realization with some permutation $\pi$ of these areas, so $A_{\pi(i)}$ is the area of the triangle of $e_i$.   From the formula in Section~\ref{sec:area-corner_vector} and since $n$ is even we know that
\begin{small}
\begin{align*} d_{n} & {=}
\frac
{A_{\pi(1)} \cdot A_{\pi(3)} \cdots A_{\pi(n-1)}}
{\sin \rho_1 \sin \rho_3 \cdots \sin \rho_{n-1}} 
\frac
{\sin \rho_0 \sin \rho_2 \cdots \sin \rho_{n-2}} 
{A_{\pi(0)} \cdot A_{\pi(2)} \cdots A_{\pi(n-2)}} 
\cdot d_0.
\end{align*}
\end{small}

Since all ray angles are equal and $d_{n}=d_0$, we therefore have 
$A_{\pi(1)}\dots A_{\pi(n-1)}=A_{\pi(0)}\dots A_{\pi(n{-}2)}$.
Define $I$ to be the indices of numbers that were areas of even-indexed edges, i.e., 
$I=\{i\in \{0,\dots,N{-}1\}: a_i=A_{\pi(2j)} \text{ for some $j$}\}$.  
Since $A_{N}=\dots=A_{n-1}=1$ do not change the product, we then have $\prod_{i\in I} a_i=\prod_{i\not \in I} a_i$ and hence found the desired subset.

Vice versa, if there is a solution $I$ to the product partition instance, then for $i\in I$ we set $\pi(2i)=i$ (so $A_{\pi(2i)}=a_i$).   For all even indices $2j$ with $j\not\in I$ we use $A_{\pi(2j)}=1$, and for the edges with odd index we use the areas $A_i$ with $i\not\in I$ as
well as the remaining unit areas.
(Note that we had added sufficiently many unit areas for this to be feasible regardless of $|I|$.)   Since $n$ is even, and again using the propagation formula, we see that we get a star-shaped solution when applying the areas in this permutation, and the reduction holds. 
\end{proof}

\subsection{Corner Distance and Edge Distance}{%
\label{sec:corner_distance-edge_distance}

In this section, we study the scenario where we are given sequences of corner distances $d_0,\dots,d_{n-1}$ and edge distances $h_0,\dots,h_{n-1}$.    
From this information we can compute the partial ray angles $\rho_i^+$ and $\rho_i^-$ (at least in RealRAM):
$\cos \rho_i^+=h_i/d_{i+1}$ and $\cos \rho_i^-=h_i/d_i$, and the angles are the solutions in the range $[0,90^\circ)$.   
Unfortunately, this does not necessarily determine the ray angle
(see Figure~\ref{fig:corner_dist_edge_dist_setup}).
We could have $\rho_i=\rho_i^++\rho_i^-$, but
if $d_i>d_{i+1}$, then we could also have $\rho_i=\rho_i^--\rho_i^+$. 

\begin{figure}[ht] \centering
\includegraphics[page=1]{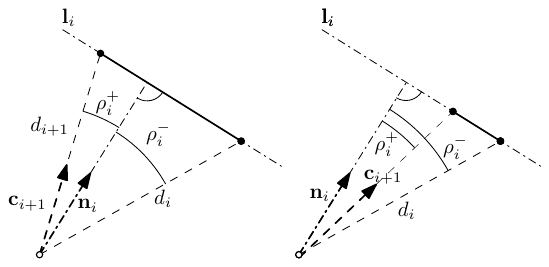}
\caption{Two choices for realizing corner distances and edge distances.}%
\label{fig:corner_dist_edge_dist_setup}
\end{figure}

We exploit this choice (`is the ray angle the sum or the difference of the partial ray angles') to argue that testing whether the input has a star-shaped realization is likely difficult.    More precisely, we provide a reduction from the NP-hard problem 2-partition~\cite{GareyJohnson}, where we are given $N$ positive numbers $a_0,\dots,a_{N-1}$ that sum to $2B$, and we want to know whether there exists a set $I$ of indices such that $\sum_{i\in I} a_i=B$.  
This reduction does not quite prove that the problem is NP-hard because our reduction uses the RealRAM model (see also~\cite{Erickson24} for related discussions), but it certainly suggests that the problem is not easy.

\begin{restatable}{thm}{CornerDistancesEdgeDistancesNPhard}
    There exists a polynomial-time reduction in the RealRAM model from 2-partition to the problem of reconstructing a star-shaped polygon from sequences of corner distances and edge distances.
\end{restatable}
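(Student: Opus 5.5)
Given a 2-partition instance $a_0,\dots,a_{N-1}$ with $\sum_i a_i=2B$, we build a sequence of corner and edge distances in which each number $a_i$ becomes a \emph{gadget}: an edge $e_i$ whose ray angle can be either $\rho_i^-+\rho_i^+$ or $\rho_i^--\rho_i^+$. The two choices will differ by an angle proportional to $a_i$. To make this possible we need $d_i>d_{i+1}$ at the gadget edge, since only then is the difference option available (see Figure~\ref{fig:corner_dist_edge_dist_setup}).

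\textbf{Gadget construction.} Fix a small scaling constant $\delta>0$ with $\delta\cdot 2B$ much less than $360^\circ$. For gadget $i$ we choose a forward partial ray angle $\beta_i:=\delta a_i/2$, so $\rho_i^+=\beta_i$, together with a backward partial ray angle $\gamma_i>\beta_i$. We then pick $d_i>d_{i+1}$ and $h_i$ with $h_i/d_{i+1}=\cos\beta_i$ and $h_i/d_i=\cos\gamma_i$. In the RealRAM model this is just the choice $h_i=1$, $d_{i+1}=1/\cos\beta_i$, $d_i=1/\cos\gamma_i$. The edge then contributes either $\gamma_i+\beta_i$ or $\gamma_i-\beta_i$, and the two differ by exactly $\delta a_i$.

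\textbf{Chaining the gadgets.} To keep the corner distances consistent from one gadget to the next, we insert connector edges. A natural choice is a connector edge with $h=d_j=d_{j+1}$, which is impossible, so instead we use $d_j=d_{j+1}$ equal to some radius, for which $\rho^-=\rho^+$ and the ray angle is forced to be $2\rho^-$: the difference option would give ray angle $0$, which is not allowed. Such edges let us move from radius $d_{i+1}$ back up to the radius we need for the next gadget while contributing a fixed, known angle. More simply, each connector edge $j$ with $d_j<d_{j+1}$ has only the sum option available, by the symmetric argument. This gives us full control of the corner distances, and the total angle contributed by the connectors is a known constant $C$.

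\textbf{Angle sum.} Every star-shaped realization has $\sum_i\rho_i=360^\circ$. We tune the free parameters, namely the $\gamma_i$ and the number and angles of connector edges, so that
\[
C+\sum_i \gamma_i-\sum_i\beta_i+\delta B=360^\circ .
\]
Then a realization exists if and only if the set $I$ of gadgets using the sum option satisfies $\sum_{i\in I}\delta a_i=\delta B$, that is, $\sum_{i\in I}a_i=B$. Conversely, given such an $I$, we place the corners at the prescribed distances with the resulting ray angles. The edge through $c_i,c_{i+1}$ then automatically has distance $h_i$, because the partial triangles are right triangles. Each individual ray angle is positive and less than $180^\circ$, and together they sum to $360^\circ$, so the corners appear in ccw order and the polygon is star-shaped.

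\textbf{Main obstacle.} The hard step is making the construction rigid and valid. We must check three things. First, ray angles of other edges genuinely have no alternative, which the monotonicity argument above handles. Second, all ray angles stay strictly inside $(0^\circ,180^\circ)$, so that the polygon is simple and star-shaped with $o$ strictly inside; using small angles (each well below $90^\circ$) with $n$ large enough, or a few larger fixed connector angles, should take care of this. Third, the construction needs only polynomially many RealRAM operations (cosines and reciprocals), which it clearly does.
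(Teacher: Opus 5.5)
\textbf{There is a genuine gap: your radius-increasing connectors are not rigid.}

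Your core strategy matches the paper's. Each gadget is an edge with $d_i>d_{i+1}$ whose ray angle is either $\rho_i^-+\rho_i^+$ or $\rho_i^--\rho_i^+$, and you combine these with $\sum_i\rho_i=360^\circ$. The problem is your claim that an edge with $d_j<d_{j+1}$ has only the sum option. It is false, and the mirror-image argument gives a second option. Place $\mathbf{l}_j$ at $y=h_j$ with tangent point $(0,h_j)$. Walking from $c_j$ to $c_{j+1}$ with $o$ on the left means moving in the $-x$ direction. Then $c_j=(-\sqrt{d_j^2-h_j^2},h_j)$, $c_{j+1}=(-\sqrt{d_{j+1}^2-h_j^2},h_j)$ is a valid configuration exactly when $d_j<d_{j+1}$, and its ray angle is $\rho_j^+-\rho_j^-$.

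In general, an edge has a unique ray angle only if $d_j=d_{j+1}$ (the difference would be $0$) or $h_j=\min\{d_j,d_{j+1}\}$ (one partial angle is $0$). Your equal-radius connectors are rigid, but they keep the corner distance fixed; your sentence that they let you ``move back up'' contradicts $d_j=d_{j+1}$. Since every gadget strictly decreases the corner distance, some connector must increase it. With your per-gadget normalization $h_i=1$, you generally even need a radius-changing connector between every pair of consecutive gadgets. Each such connector is an extra, unaccounted binary choice that lowers the angle sum by $2\min\{\rho_j^-,\rho_j^+\}$. So the direction ``realization $\Rightarrow$ partition'' is not established.

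\textbf{The missing idea} is to make the radius-increasing edge rigid by setting its edge distance equal to its smaller corner distance. The paper chains the gadgets directly, with $h_i=d_i\cos(3a_i)$ and $d_{i+1}=h_i/\cos(a_i)$. After scaling to $B=30$ this gives $\rho_i\in\{4a_i,2a_i\}$, with no intermediate connectors. It then closes with two edges:
\begin{itemize}
\item one edge with $h_N=d_N$ and $d_{N+1}=d_0$, whose ray angle is forced to be $\arccos(d_N/d_0)$;
\item one isosceles edge back to $c_0$, whose $h_{N+1}$ makes its forced ray angle $180^\circ-\rho_N$.
\end{itemize}
The gadget angles must then sum to $180^\circ=6B$. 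With $I$ the set of difference choices, this reads $8B-2\sum_{i\in I}a_i=6B$.

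Alternatively, you could keep your connectors and exclude their second option by magnitude. Every alternative option only lowers the angle sum, and your maximum total is $360^\circ+\delta B$. So it suffices that $2\min\{\rho_j^-,\rho_j^+\}>\delta B$ for every connector. That argument has to be made explicitly, replacing the incorrect rigidity claim.
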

\begin{proof}
Let $a_0,\dots,a_{N-1},B$ be a 2-partition instance.   By padding it with zeroes, we may assume that $N$ is even and we want a partition with $|I|=N/2$.   By adding some large number $m$ to each $a_i$ (and $mN$ to $B$), we may assume that $0<a_i< B/2$ for all $i$.   Finally 
after scaling (and permitting rational numbers for the $a_i$'s), we may assume that $B=30$. 

We will construct sequences of $n=N+2$ corners and edge distances as follows.   Arbitrarily fix $d_0$.
Then, for $1\leq i < N$, define $h_i=d_{i} \cos(3a_i)$, which is positive since $a_i<30$.
In consequence, the backward ray angle $\rho_i^-$ is $3a_i$ in any realization.
Next define $d_{i+1}=h_i/\cos(a_i)=d_i \cos(3a_i)/\cos(a_i)<d_i$, so the forward ray angle $\rho_i^+$ is $a_i$.
Finally close up the input by setting $h_{N}=d_{N}$, $d_{N+1}=d_0$  and 
$$h_{N+1}=d_0\cos\Big(90^\circ-\tfrac{1}{2}\arccos(d_{N}/d_0)\Big).$$
Figure~\ref{fig:corner_dist_edge_dist} shows the construction for a small example.

\begin{figure}[h!]\centering%
\includegraphics[page=3,scale=1.0]{figures/corner_dist_edge_dist.pdf}
\caption{Construction of edge distances and corner distances for encoding the 2-partition instance $\{4,12,13,13,7,11\}$.   Each forward ray angle indicates with $\pm$ whether it adds or subtracts to the ray angle.
}%
\label{fig:corner_dist_edge_dist}
\end{figure}

Assume that we have a realization of this input.
Then, for $i<N$, the partial ray angles are $\rho_i^-= 3a_i$ and $\rho_i^+=a_i$.    Also, by $d_{i+1}< d_{i}$, the ray angle $\rho_i$ is either $4a_i$ or $2a_i$.   Let $I$ be the indices of all those ray angles $\rho_i$ that are $2a_i$; then the ray angle between $\mathbf{c}_0$ and $\mathbf{c}_{N}$ is $\sum_{i\in I} 2a_i + \sum_{i\in \{0,\dots,N-1\}\setminus I} 4a_i = 8B-\sum_{i\in I} 2a_i$. 

Now look at the last two triangles $T_{N}$ and $T_{N+1}$.   We set the edge distance $h_{N}$ equal to the corner distance $d_{N}$, which means that $T_{N}$ must have a right angle at $c_{N}$.  With this there is no choice for its ray angle, which must be $\rho_{N}=\arccos(d_{N}/d_0)$. 
Triangle $T_{N+1}$ is isoceles by construction, which means that there is no choice for its ray angle, which is
$$\rho_{N+1}=2\arccos(h_{N+1}/d_0)=180^\circ-\rho_{N}$$
by our choice of $h_{N+1}$.
So $\rho_{N}+\rho_{N+1}=180^\circ$ and
the remaining ray angles sum to $180^\circ=6B$. Since they also sum to $8B-\sum_{i\in I} 2a_i$ we hence have $\sum_{i\in I} a_i = B$.   

So from any realization we can extract a solution to the 2-partition problem. Similarly, one shows how to use a 2-partition solution to construct a star-shaped polygon and the reduction is complete.
\end{proof}

\subsection{Edge Length and Corner Vector}%
\label{sec:edge_length-corner_vector}

Consider the scenario where we are given the edge lengths $\ell_1,\dots,\ell_n$ and the corner vectors $\mathbf{c}_1,\dots,\mathbf{c}_n$.    Here a propagation approach is difficult, because even if we knew (or guessed) the location of corner $c_i$, there are often two possible locations for $c_{i+1}$  where the circle of radius $\ell_i$ centered at $c_i$ intersects the corner ray.  
See Figure~\ref{fig:edge_length_corner_vector}.

\begin{figure}[ht]\centering
\includegraphics[page=1]{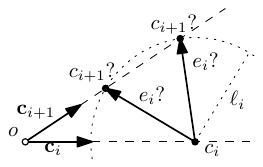}
\caption{Two possibilities for the next corner, even if we know the previous one.}%
\label{fig:edge_length_corner_vector}
\end{figure}

We have not been able to convert this insight into an NP-hardness in the plane, but if we change the model to consider realizations on the cylinder, then we can argue NP-hardness.   Specifically, assume that the origin is at infinity, and the corner rays hence become parallel lines (up to rotation say they are vertical).   A \emph{star-shaped realization on the cylinder} is then the same as a polygonal chain with corners on the corner rays and edge lengths $\{\ell_1,\dots,\ell_n\}$ such that $c_0$ and $c_n$ have the same $y$-coordinate.    The following theorem 
hence says that testing the existence of a star-shaped realization on the cylinder is NP-hard, at least if we allow flat angles.

\begin{restatable}{thm}{CornerVectorsEdgeLengthsNPhard}
    Given sequences of numbers $\ell_1,\dots,\ell_n>0$ and parallel lines $\mathbf{c}_1,\dots,\mathbf{c}_n$ on the flat cylinder, it is NP-hard to decide whether there exists a closed polygonal chain on the flat cylinder with edge lengths $\ell_1,\dots,\ell_n$ and corners on $\mathbf{c}_1,\dots,\mathbf{c}_n$. 
\end{restatable}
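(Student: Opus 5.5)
We reduce from 2-partition (as in the previous section), but now in its plain integer form: given positive integers $a_1,\dots,a_N$ summing to $2B$, decide whether some $I$ has $\sum_{i\in I}a_i=B$. This problem is NP-hard in the strong-enough sense needed here, and the reduction will use only rational data, so it runs in the WordRAM model.

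On the flat cylinder the origin is at infinity, so the corner rays are vertical lines $x=x_1,\dots,x_n$. A polygonal chain with corners $c_j=(x_j,y_j)$ has edge $e_j$ from $c_j$ to $c_{j+1}$. Its horizontal extent $w_j=x_{j+1}-x_j$ is fixed by the input, so the edge length forces $|y_{j+1}-y_j|=\sqrt{\ell_j^2-w_j^2}$. Only the sign of this vertical displacement can be chosen. This is exactly the binary choice of Figure~\ref{fig:edge_length_corner_vector}, and closing the chain means that the signed vertical displacements sum to zero.

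The construction is as follows. Place the vertical lines at $x_j=j$ for $j=1,\dots,n$ with $n=N+1$, so that every horizontal extent is $1$. The closing edge $e_n$ wraps around the cylinder, and we choose the circumference so that its horizontal extent is also $1$. For $j\le N$ set $\ell_j=\sqrt{1+a_j^2}$, so that $|y_{j+1}-y_j|=a_j$. Set $\ell_n=1$, so the closing edge is horizontal and forces $y_{n+1}=y_1$; this is the closing condition. The squared lengths $\ell_j^2=1+a_j^2$ are integers, so we can give the input as squared lengths, or choose a Pythagorean scaling. For example, we can use extents $2a_j$ and lengths $a_j^2+1$, since $(a_j^2+1)^2-(2a_j)^2=(a_j^2-1)^2$. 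With this choice the vertical displacements are $a_j^2-1$, so we should instead reduce from a partition variant on the numbers $a_j^2-1$. The simplest fix is to allow the lines at arbitrary rational positions and state the lengths directly. A closed chain then exists iff there are signs $\sigma_j\in\{\pm1\}$ with $\sum_j\sigma_j a_j=0$, that is, iff the 2-partition instance is a yes-instance.

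Both directions are immediate. A partition $I$ gives the signs ($+$ on $I$, $-$ elsewhere), and these define the $y_j$ by cumulative sums. Conversely, the signs of any realization give a partition. The statement allows flat angles, so consecutive edges with equal sign and equal slope cause no problem. Zero-length vertical pieces cannot occur because every $a_j>0$.

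The main obstacle is the encoding: keeping all numbers rational and of polynomial size, so that the hardness is genuine NP-hardness and not only a RealRAM reduction. I would handle this with the Pythagorean-triple choice above, taking horizontal extents $w_j=2a_jt$ and lengths $\ell_j=a_j^2t^2+1$ for a common integer $t$, which gives vertical displacements $a_j^2t^2-1$. Alternatively, the vertical displacements can be made the primary data: choose integer vertical targets $b_j$ and extents so that $\ell_j$ is integral, for example $w_j=b_j^2-1$ and $\ell_j=b_j^2+1$ with vertical $2b_j$. The second option reduces directly from 2-partition on $b_j=a_j$ and is the one I would use.
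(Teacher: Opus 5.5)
Your final construction is correct and is essentially the paper's reduction from 2-partition: gaps $a_j^2-1$ and lengths $a_j^2+1$ force vertical steps $\pm 2a_j$, and closing the chain means the signs balance. The paper instead scales one 3-4-5 triangle by $a_i$ (gap $4a_i$, length $5a_i$, step $\pm 3a_i$, cylinder width $8B$), which is integral and linear from the start, needs no extra horizontal closing edge, and avoids your detours through irrational or nonlinear encodings; your only slip is that $a_j=1$ gives gap $0$ and hence coinciding lines, which you avoid by first doubling all $a_j$ or by using the scaled 3-4-5 triple.
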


\begin{proof}
    We reduce again from the 2-partition problem, so assume that we are given positive numbers $a_0,\dots,a_{n-1},B$ with $\sum_{i=0}^{n-1} a_i=2B$.    Define the vertical lines such that the distance from $\mathbf{c}_i$ to $\mathbf{c}_{i+1}$ is $4a_i$ for $i=0,\dots,n{-}2$, and the flat cylinder has width $8B$ (so the distance from $\mathbf{c}_{n-1}$ to $\mathbf{c}_0$ is $4a_{n-1}$ when ``wrapping around'').
   See also Figure~\ref{fig:edge_length_corner_vector_cylinder}.
Set $\ell_i=5a_i$ for $i\in \{0,\ldots,n-1\}$.

\begin{figure}[ht]\centering
\includegraphics[scale=0.7,page=2]{figures/edge_length_corner_vector.pdf}
\caption{NP-hardness construction.  We show the flat cylinder; the standing cylinder is obtained by identifying the curved dotted lines.}%
\label{fig:edge_length_corner_vector_cylinder}
\end{figure}

    Assume we had a realization on the cylinder; up to translation we may assume that $c_0$ has $y$-coordinate $0$.   
    Once we know $c_i$, edge $e_i$ can have either increasing
    or decreasing slope; either way it forms a right triangle with the horizontal through $c_i$ and the (vertical) corner ray for $c_{i+1}$.   The distance between the corner rays is $4a_i$, and the vertical distance between $c_i$ and $c_{i+1}$ hence $\sqrt{\ell_i^2-(4a_i)^2}=3a_i$.   Thus, $e_i$ either adds or subtracts $3a_i$ to the $y$-coordinate of the corner.  Define $I$ to be all those indices where $e_i$ adds $3a_i$ to the $y$-coordinate.   Since $c_n$ must have $y$-coordinate 0 (to match up with $c_0$), we must have $\sum_{i\in I} a_i - \sum_{i\not\in I} a_i=0$ or
    $\sum_{i\in I} a_i=B$ as desired.   

    So from any realization on the cylinder we can extract a solution to the 2-partition problem.   Similarly one shows how to use a 2-partition solution to construct a realization on the cylinder and the reduction is complete.
\end{proof}

\subsection{Area and Edge Normal (the $L_0$-Minkowski Problem)}%
\label{sec:area-edge_normal}

Reconstructing a polygon from triangle areas and edge normals is the $L_0$-Minkowski problem discussed in the introduction,
for which existence of solutions has mostly been solved by Stancu \cite{Stancu2002}: If all edge normals are independent (i.e., we do not have a pair of parallel edges), then there always is a solution.   (She also gives some conditions under which parallel edges can be allowed.) However,  an algorithm to actually reconstruct such polygons remains elusive.

Assume we are given the sequences of edge normals $\mathbf{n}_0,\dots,\mathbf{n}_{n-1}$ as well as areas $A_0,\dots,A_{n-1}$.     We state here a propagation formula that starts with two variables, and can then compute the coordinates of all corners in the WordRAM model:

\begin{itemize}
    \item   Fix the point of corner $c_0$ (its coordinates $(x,y)$ are our two variables).
    \item With this we know the edge distance $h_0=\langle c_0,\mathbf{n}_0\rangle$ since it is the projection of $c_0$ onto the edge ray.
    \item With this, we know the edge-length $\ell_0=2A_0/h_0$.
    \item Since we have the edge-normal, we have all information to compute the next corner:
    $c_1=c_0+\ell_0\mathbf{n}_0^\circlearrowleft$.
\end{itemize}

Repeating the process gives us the coordinates of all corners, dependent on the choice $(x,y)$ of the coordinates of $c_0$.
The formula can all be packed into one as follows:
$$ c_{i+1} = c_{i} + \frac{2A_i}{\langle c_{i},\mathbf{n}_i\rangle} \mathbf{n}_i^\circlearrowleft$$
which clearly can be computed in the WordRAM model.
Ideally, we would now evaluate the condition that the computed corner $c_{n}$ equals the corner $(x,y)$ that we started with, and derive from it which choices of $x$ and $y$ (if any) lead to a realization.   Unfortunately the dependency of $c_{n}$ on $x$ and $y$ is very complex 
and we have not been able to turn this idea into an algorithm to test realizability.

\subsection{Corner Vector and Edge Distance, or Edge Normal and Corner Distance}%
\label{sec:corner_vector-edge_distance}%
\label{sec:edge_normal-corner_distance}%
\label{sec:edge_distance-corner_vector}%
\label{sec:corner_distance-edge_normal}

For convex realizations, the scenario ``corner vectors and edge distance'' is equivalent to ``edge normal and corner distance'', due to polarity.   We will therefore only study the former one.   We have very few results here.   
Given the edge distances, we want the lines that support edges to be tangent to tangent-circles, but even if we know the location of a corner we have a choice of tangent-line for the next edge.
Figure~\ref{fig:onlyEdgeDistancesConvexTwoChoices} shows an example where this can happen even if we want a convex polygon angle.   Specifying the corner vectors does not help with making this choice, but restricts the solution space enough that the general construction of Section~\ref{sec:edge_distance-edge_distance} will usually not work.   So we suspect that the reconstruction problem is NP-hard in this scenario, but this remains open. 

\begin{figure}[ht] \centering
\includegraphics[page=3,trim=0 160 150 0,clip]{figures/normal-ray-must-hit-ei.ipe.pdf}
\caption{Two choices to realize edge distances even for convex polygons.}%
\label{fig:onlyEdgeDistancesConvexTwoChoices}
\end{figure}

\subsection{Two of Area, Edge Length and Edge Distance}%
\label{sec:area-edge_length}%
\label{sec:edge_length_area}%
\label{sec:edge_distance-edge_length}%
\label{sec:edge_length-edge_distance}%
\label{sec:area-edge_distance}%
\label{sec:edge_distance-area}

This section covers three scenarios that are 
all equivalent, since any two of these features can be used to reconstruct the third due to $A_i=\tfrac{1}{2}\ell_i h_i$.
So we can focus on the scenario ``edge distance and area''.
Unfortunately for much of the same reason as in the previous subsection we have very few results here.  As in Figure~\ref{fig:onlyEdgeDistancesConvexTwoChoices} respecting the edge distances alone leaves us with a choice between which tangent-line to use, respecting the area does not help in making this choice, but means that the construction of Section~\ref{sec:edge_distance-edge_distance} usually will not work.   Again we suspect hardness, but leave this for future study.

}

\section{Further Results and Outlook}

Many open problems remain. Some were listed above, but there are also many different scenarios where one studies other features and/or does not specify features in their entirety but assume that only some of their numbers are given.   Finally, we always assumed that the polygon is (at least) star-shaped from the origin; there are many more questions arising when we drop this condition.

\bibliographystyle{plainurl}
\bibliography{references}

\end{document}